\newtheorem{defi}{Definition}
\newtheorem{lem}{Lemma}
\newtheorem{thrm}{Theorem}
\def\BState{\State\hskip-\ALG@thistlm}
\definecolor{lightgray}{rgb}{.9,.9,.9}
\definecolor{darkgray}{rgb}{.4,.4,.4}
\definecolor{purple}{rgb}{0.65, 0.12, 0.82}
\lstdefinelanguage{JavaScript}{
  keywords={typeof, new, true, false, catch, function, return, null, catch, switch, var, if, in, while, do, else, case, break},
  keywordstyle=\color{blue}\bfseries,
  ndkeywords={class, export, boolean, throw, implements, import, this, extends},
  ndkeywordstyle=\color{darkgray}\bfseries,
  identifierstyle=\color{black},
  sensitive=false,
  comment=[l]{//},
  morecomment=[s]{/*}{*/},
  commentstyle=\color{purple}\ttfamily,
  stringstyle=\color{red}\ttfamily,
  morestring=[b]',
  morestring=[b]",
  moredelim=[is][\underbar]{_}{_}
}
\def\sys{\textsc{DeterFox}\xspace}
\newenvironment{icompact}{
  \begin{list}{$\bullet$}{
    \itemindent -.05em
    \parsep 0pt plus 1pt
    \partopsep 0pt plus 1pt
    \topsep 2pt plus 2pt minus 1pt
    \itemsep 0pt plus 1.3pt
    \parskip 0pt plus 2pt
    \leftmargin 0.13in}
       }
  {\normalsize\end{list}}
\begin{document}
\title{{Deterministic Browser}}


\setcopyright{none}
\begin{abstract}
Timing attacks 
 have been a continuous threat to users' privacy in modern browsers. To mitigate such attacks, 
  existing approaches, such as Tor Browser and Fermata, add jitters to the browser clock so that an attacker cannot accurately measure an event.  However, such defenses only raise the bar for an attacker but do not fundamentally mitigate timing attacks, i.e., it just takes longer than previous to launch a timing attack.  

In this paper, we propose a novel approach, called deterministic browser, which can provably prevent timing attacks in modern browsers.  Borrowing from Physics, we introduce several concepts, such as an observer and a reference frame.  Specifically, a snippet of JavaScript, i.e., an observer in JavaScript reference frame, will always obtain the same, fixed timing information so that timing attacks are prevented; at contrast, a user, i.e., an oracle observer, will perceive the JavaScript differently and do not experience the performance slowdown.  We have implemented a prototype called \sys and our evaluation shows that the prototype can defend against browser-related timing attacks.



\end{abstract}

\author{Yinzhi Cao}
\affiliation{%
  \institution{Lehigh University}
  \city{Bethlehem} 
  \state{PA} 
}
\email{yinzhi.cao@lehigh.edu}

\author{Zhanhao Chen}
\affiliation{%
  \institution{Lehigh University}
  \city{Bethlehem} 
  \state{PA} 
}
\email{zhc416@lehigh.edu}

\author{Song Li}
\affiliation{%
  \institution{Lehigh University}
  \city{Bethlehem} 
  \state{PA} 
}
\email{sol315@lehigh.edu}

\author{Shujiang Wu}
\affiliation{%
  \institution{Lehigh University}
  \city{Bethlehem} 
  \state{PA} 
}
\email{shw316@lehigh.edu}

\keywords{Determinism, Web Browser, Timing Side-channel Attack}

\begin{CCSXML}
<ccs2012>
<concept>
<concept_id>10002978.10003006.10003011</concept_id>
<concept_desc>Security and privacy~Browser security</concept_desc>
<concept_significance>500</concept_significance>
</concept>
<concept>
<concept_id>10002978.10003022.10003026</concept_id>
<concept_desc>Security and privacy~Web application security</concept_desc>
<concept_significance>300</concept_significance>
</concept>
</ccs2012>
\end{CCSXML}

\ccsdesc[500]{Security and privacy~Browser security}
\ccsdesc[300]{Security and privacy~Web application security}

\maketitle

\section{Introduction}

Timing attacks have continuously posed a threat to modern web browsers, violating users' privacy.  For example, an adversary can infer the size of an external, cross-site resource based on the loading time~\cite{VanGoethem:2015:CST:2810103.2813632,vanrequest};  a website can fingerprint the type of the browser based on the performance of JavaScript~\cite{mowery2011fingerprinting,mulazzani2013fast}; 
 two adversaries can talk to each other 
  via a covert channel~\cite{Oren:2015:SSP:2810103.2813708,VanGoethem:2015:CST:2810103.2813632}.

Faced with browser-related timing attacks, researchers from both industry and academia have proposed solutions to mitigate such attacks by adding noises or called jitters to the time available to the attackers and decreasing the time precision.  For example, Tor Browser~\cite{torbrowser}, an industry pioneer in protecting users' privacy, reduces the resolution of $performance.now$, a fine-grained JavaScript clock, to 100ms; following Tor Browser,  major browsers, such as Chrome and Firefox, also reduces the resolution to 5$\mu$s.  Similarly, Kohlbrenner and Shacham proposed  Fermata~\cite{fuzzyfox}, a solution that introduces fuzzy time~\cite{Hu:1992:RTC:2699806.2699810} into browser design and then degrades not only the explicit clock like $performance.now$ but also other implicit clocks like $setTimeout$ in the browser. 

However, the aforementioned prior work---which reduces the browser's clock resolution---only raises the bar for an attacker but do not fundamentally mitigate timing attacks.  Even if such work is deployed, it just takes longer time for an adversary to launch an aforementioned timing attack and obtain the information that she needs.  Say, an adversary tries to infer a resource's size based on the loading time, i.e., launching a side-channel attack.  All the existing defenses only limit the bandwidth of such  a side channel as acknowledged by Kohlbrenner and Shacham~\cite{fuzzyfox}. 
 Consider JavaScript performance fingerprinting as another example. Jitters that are added by existing defenses can be averaged out, when the attack is performed longer or in multiple runs.  

Apart from the prior work targeting browser-related timing attacks, another direction that prevents lower-level timing attacks, such as L2 cache attacks, is to introduce determinism.  Examples of such approaches are Deterland~\cite{deterland}, StopWatch~\cite{Li:2014:SCA:2689660.2670940,DBLP:conf/dsn/LiGR13}, and $\lambda^{PAR}_{SEC}$~\cite{DBLP:conf/csfw/ZdancewicM03}.  However, Deterland still adopts statistical, non-deterministic solutions for external events by grouping events together, which only limits the bandwidth of an external timing attack.  StopWatch cannot prevent internal timing attacks, and needs virtual machine replication for I/O events, which is not applicable at a higher level, such as a browser.    $\lambda^{PAR}_{SEC}$ is a new programming language with ensured security property, but requires that all the existing programs are rewritten and follow their language specification.  In sum, it still remains unclear how to apply determinism in real-world systems even at lower level let alone web browsers so as to prevent timing attacks.



In this paper, we propose deterministic browser, the {\it first} approach that introduces determinism into web browsers and provably prevents browser-related timing attacks.
 Both challenges and opportunities arise in deterministic browser.  One challenge is that JavaScript, the dominant web language, is event-driven, i.e., 
 JavaScript engine may be waiting for events without executing any statements; accordingly, one opportunity is that JavaScript is singled-threaded---specifically no event can interrupt the JavaScript execution.\footnote{The statement holds, even if we take WebWorkers, a new HTML5 standard, into consideration.}

To better explain deterministic browser, let us revisit timing attacks.  A timing attack is that an adversary tries to measure the duration of a target, secret event (called a ``target secret'' in the paper) using a reference clock of the physical time.  The target secret is an event that the adversary does not know how long it takes to finish, e.g., the parsing time of a cross-origin script.  The reference clock is used to measure the target secret, which could be explicit, e.g., $performance.now$, or implicit, e.g., $setTimeout$ and $requestAnimationFrame$.  An implicit reference clock is 
%
%
 based on an event that the adversary knows how long it takes to finish.  
  The duration of such known event is defined as an implicit clock tick, usually much smaller than the duration of the target secret. 
  
  A {\it necessary condition} of launching a successful timing attack is that all three key elements---i.e., ($i$) an adversary, ($ii$) a target secret, and ($iii$) a reference clock---have to co-exist at the same time in a runtime environment.  An adversary is the subject, a target secret is the object, and a reference clock is the tool that the subject used to steal the object.  Therefore,  a natural solution is to remove one or more of the three key elements from the runtime environment and break the necessary condition.  However, all three elements are essential in a browser and hard to remove.  An adversary just looks like a normal client-side JavaScript.  The detection of such JavaScript is a different research direction and based on past research on JavaScript malware detection~\cite{zozzle,JShield}, false positives and negatives are hard to avoid. 
   A target secret is a common browser operation such as script parsing---i.e., the removal of target secrets will break the browser functionality.  We can remove\footnote{Strictly speaking, because a clock is an essential concept, the so-called removal of a clock is just to define a new clock in which the ticking unit is zero.} an explicit reference clock of the physical time, such as $performance.now$, but many implicit clocks like $setTimeout$ and using JavaScript execution as minor clocks~\cite{fuzzyfox} also exist and contribute to essential browser functionalities.

In deterministic browser, instead of removing key elements from the entire runtime environment, i.e., the web browser, we break down the web browser into several smaller units and remove one or more different key elements from these smaller units.  That is, from a macro perspective, all three key elements still exist in the web browser; from a micro perspective, at most two elements exist in one smaller unit.   
 Specifically, the smaller unit that we introduce is called reference frames (RFs),\footnote{Note that a reference frame is not an HTML or JavaScript frame: An HTML or JavaScript frame may contain many reference frames.  To avoid confusion, we often use the abbreviation, i.e., RF, for reference frame in the paper. } a new, abstract concept borrowed from Physics.  Each RF has one independent clock and sometimes an observer---e.g., a JavaScript program possibly controlled by the adversary.  Just as in Physics, the job of the observer is to measure the duration of an event in the RF, i.e., making two observations at the start and end of the event, obtaining two timestamps and calculating the interval.  In web browser, one important yet different property of a RF is that one and only one event---e.g., a target secret or an implicit clock tick event---can be executed in one RF; different events are executed in different RFs separately.  (As discussed in Section~\ref{subsec:concepts}, RFs are usually implemented by OS level threads.)  We now look at different RF examples and explain how to remove key elements of a timing attack from these RFs. In general, RFs can be categorized into two types: with and without a target secret. 





 First, let us say a RF is executing a target secret---that is, all three key elements co-exist in this RF.  
 %
  JavaScript execution engine is such an example, in which the target secret is the execution performance of JavaScript. 
  In such case, we need to redefine the clock in the RF such that when an observer measures the duration of any event, it can always calculate the duration based on what it has already known without looking at the clock.  A RF with such a clock is defined as deterministic (Definition~\ref{def:det}) in our paper.  Here our definition of determinism---which provably prevents timing attacks shown in Theorem~\ref{theorem:detdefeattiming}---is broader than the existing definition, such as the one used in DeterLand~\cite{deterland}.  We further show that in the context of JavaScript RF, our definition can boil down to a simple form (Theorem~\ref{th:deteq}), and then be specialized to the existing definition.  Particularly, the clock in JavaScript RF ticks based on the number of executed opcodes. 
  


Second, let us say a RF has no target secret, e.g., a RF executing a tick event for an implicit reference clock.  Concrete examples are like a RF executing $setTimeout$, one dealing with an HTTP request, and another executing $requestAnimationFrame$. In such cases, only two elements exist in the RF---there is no target secret for the adversary to steal.  That is, the clock in such RFs can follow the physical time. 
%
 However, the clock information in such RFs can be communicated to a deterministic RF that has a target secret, hence being used for measurement.  Therefore, when a deterministic RF receives information from another (non-deterministic) RF, we need to make the recipient RF remain deterministic (Lemma~\ref{thereom:comm})---the technique of such communication between RFs is defined as  deterministic communication (Definition~\ref{def:comm}).  Particularly, as shown in Section~\ref{subsec:comm}, we introduce a priority queue to replace the original event queue, i.e., the central communication data structure, in the JavaScript event model.  Our new priority queue synchronizes clocks in two RFs from the viewpoint of the recipient observer.
  In other words, when the recipient observer 
   sees the communication message conveying the time from other RF, e.g., the physical time, the conveyed time is the same as its own clock.


Apart from the aforementioned categorization of RFs based on whether a target secret exists, RFs can also be classified based on the existence of an observer, one with an observer called a main RF and one without called  an auxiliary RF.  The clock in a RF is in an undefined state when an observer is absent, because only observers can obtain timing information, i.e., making observations.  Then, the clock will become 
 available when the observer is back, usually when a RF communication happens. For example, when a JavaScript RF is waiting for an event to finish, the clock in the JavaScript RF is undefined and determined by the event to finish. For another example, when a $setTimeout$ RF is waiting for some amount of time, the clock is also undefined and determined when the RF communicates with the JavaScript RF.

Another important property of deterministic browser---which is just similar to Physics---is that 
 different observers perceive different time elapse in their own RFs. 
 For example, the observer in JavaScript RF, i.e., a snippet of JavaScript code possibly controlled by an adversary,
 %
  will always obtain the same performance information when measuring itself so that timing attacks are prevented. At contrast, an oracle observer, e.g., the user of the browser, will perceive the JavaScript execution just as normal as in a legacy browser so that she does not experience the performance slowdown as what the JavaScript observer does.

We have implemented a prototype of our deterministic browser called \sys by modifying a legacy Firefox browser.  Our evaluation shows that \sys can defend against browser-related timing attacks in the literature 
 and is compatible with existing websites.  Note that our implementation is open-source---the repository and a video demo can be found at this website (\url{http://www.deterfox.com}).



\section{Threat Model} \label{sec:threatmodel}

We present our threat model 
 from three aspects: in-scope attacks, a motivating example, and out-of-scope attacks. 
\vspace{0.05in}

\noindent{\bf In-scope Attacks.} \hspace{0.01in} In this paper, we include 
 browser-related timing attacks in our threat model. As described, such attacks are made possible because an adversary can measure the duration of a target secret at client-side browsers using a reference clock.   The adversary is the client-side JavaScript and the reference clock could be the physical one defined by $performance.now$ or other implicit ones, such as $setTimeout$.  The target secret varies according to the specific attack scenario, and now 
 let us use  the following three examples to explain it. 
 
 


%
%


\begin{icompact}
\item {JavaScript Performance Fingerprinting.} \hspace{0.01in} JavaScript performance fingerprinting~\cite{mowery2011fingerprinting,mulazzani2013fast} is one special case of browser fingerprinting where an adversary executes a certain snippet of JavaScript code and fingerprints the browser based on how long the execution takes.
 In such attacks, the {\it target secret} is the performance of JavaScript execution on the specific machine. 
 

\item {Inference of Cross-origin Resource via Loading Time.} \hspace{0.01in}  
An adversary website may load resources from a third-party website via a script tag and measure the parsing time by the browser until an error event is triggered.  By doing so, the adversary can infer the size of the response and thus other private information, such as the number of Twitter friends~\cite{VanGoethem:2015:CST:2810103.2813632,vanrequest}.  This is also called a timing side channel attack.  In this example, the {\it target secret} is the loading of the cross-origin response, which the adversary does not know due to the same-origin policy. 

\item {Inference of Image Contents via SVG Filtering.} \hspace{0.01in} 
 Stone~\cite{stone2013pixel} shows that an adversary can apply an SVG filter on an image and infer the contents based on how long the filtering process takes.  By doing so, the adversary can steal the pixels of a cross-origin image or browsing history.  In this example, the {\it target secret} is the performance of the SVG filter.  Note that the {\it reference clock} used in this attack is $requestAnimationFrame$. 
 

\end{icompact}

\vspace{0.05in}

\noindent{\bf A Motivating Example.} \hspace{0.01in} Now let us look at a motivating example in Figure~\ref{fig:example}.
 The example contains two versions of attacks: synchronous (Line 1--21) and asynchronous (Line 22--28) ones.  The synchronous version uses $performance.now$ as the reference clock and the asynchronous one uses $setInterval$. We will discuss how determinism prevents these two versions in Section~\ref{subsec:defeatsyncattack} and~\ref{subsec:defeatasyncattack} respectively.

The synchronous attack (Line 1--21) adopts a so-called clock-edge technique invented by Kohlbrenner and Shacham~\cite{fuzzyfox}, which measures a minor clock, e.g., a simple, cheap JavaScript operation like $count++$, by repeating the operation between two edges of a major clock, e.g., the physical clock, for multiple times.  Then, one can calculate the minor clock by dividing the granularity of major clock by the number of executed operations. 
 Here is the detail of the attack.  The $nextEdge$ function (Line 1--8) tries to find the next edge of the major clock ($performance.now$) and count the number of executed operations towards the next major clock edge.  
 For a JavaScript fingerprinting attack (Line 9--12), one can first skip the rest of the current major clock period (Line 11), and then calculate the number of operations in a full major clock period (Line 12) for fingerprinting.  
  For a side-channel attack (Line 13--21), one can find the fingerprint (Line 14), start from a new edge (Line 15), and then measure a target secret using the major clock (Line 16--18).  
  The remaining cycles in the minor clock are counted (Line 19), and the duration for the target secret is calculated by combining the minor and major clocks (Line 20).  Note that $grain$ is the granularity of the major clock.  



The asynchronous attack (Line 22--28) adopts an implicit clock, e.g., $setTimeout$, $setInterval$ or $requestAnimationFrame$, to measure an asynchronous target secret, e.g., the time to parse a third-party response. First, the $sideChannelAsync$ function (Line   
25--28) invokes an asynchronous target secret function---for simplicity, we hide the details in the $targetSecretAsync$ function with only a $callback$ function (Line 26).  Then, the implicit clock is invoked via a $setInterval$ function with $countFunc$ as a callback (Line 27). The $countFunc$ will be invoked periodically with a $u$ interval until the $callback$ is invoked, i.e., the target secret finished execution.  Then, one can calculate the duration of the target secret based on how many times $countFunc$ is executed (Line 24). 


\begin{figure}
\scriptsize
    \begin{lstlisting}[language={JavaScript},xleftmargin=2em,frame=none,framexleftmargin=2em,numbersep=7pt,numbers=left,keepspaces=true]
function nextEdge() {
  start = performance.now();
  count = 0;
  do {
  	count++;
  } while (start==performance.now());
  return count;
}
function fingerprinting() { 
  nextEdge();
  fingerprint = nextEdge();
}
function sideChannelSync() {  
  fingerprint = fingerprinting();
  nextEdge();
  start = performance.now();
  targetSecret();  
  stop = performance.now();
  remain = nextEdge();
  duration = (stop - start) + (fingerprint-remain)/fingerprint*grain;
}
total = 0;
function countFunc(){total++;}
function callback(){duration = total*u;}
function sideChannelAsync() {
 targetSecretAsync(callback);
 setInterval(countFunc, u);
}
\end{lstlisting}
\vspace{-0.2in}
\caption{A Motivating Example.
 } \label{fig:example}
 \vspace{-0.15in}
\end{figure}

\vspace{0.05in}

\noindent{\bf Out-of-scope Attacks.} \hspace{0.01in} We now look at out-of-scope attacks. 
 Particularly, we restrict the target secret in the timing attacks to be browser-related and exclude these depending on other parties, such as the web server and the user.  Such restriction is natural because we only make the browser deterministic but not others. 

\begin{icompact}
\item Server-side timing attacks. \hspace{0.01in} In a server-side timing attack, a malicious client infers a secret---e.g., some information that the client cannot access without login---at the server side~\cite{side-channel-leaks-in-web-applications-a-reality-today-a-challenge-tomorrow}.  Examples are that a user can infer the number of private photos in a hidden gallery based on the server's response time.  To prevent such attacks, one can rely on existing defenses~\cite{DBLP:conf/ndss/BackesDK13} by normalizing web traffic or changing the server-side code. 
\item User-related timing attacks.  \hspace{0.01in} User-related timing attacks refer to, for example, biometric fingerprinting, which measures the user's behavior, such as keystroke and mouse move, and identifies users based on these biometrics. 
%
 Many of such fingerprinting tactics, such as keystroke dynamics, are also related to time.  Such attacks are out of scope, because the target secret, e.g., mouse move, comes from the user.  
 
 
 

\end{icompact}



\section{Deterministic Browser}  \label{sec:def}
In this section, we introduce several general definitions. 


\subsection{Concepts} \label{subsec:concepts}
Borrowing from Physics, we first introduce the concepts used in our deterministic browser.  

A {\em reference frame} (RF), in the context of our deterministic browsers (referred as our context), is an abstract concept with a clock that ticks based on certain criteria, e.g., the real-world physical time, the execution of JavaScript, and the parsing of HTML.  Concretely, we can consider that a RF can be implemented via an OS-level thread in web browsers, although there are no strict one-one mappings between threads and RFs in web browsers.  Specifically, some browser kernel threads may not be RFs and two threads, if one runs immediately after another, may belong to the same RF. 

Let us take a browser API, e.g., $setTimeout$ or $setInterval$, and its implementation in Firefox, as a concrete example of creating RFs.  The JavaScript engine of Firefox, a RF in our context, is running in the main thread of Firefox.  When JavaScript code calls $setTimeout$ or $setInterval$ under the $window$ object,  a new timer thread---i.e., a new RF---is created.  After the specified time in $setTimeout$ or $setInterval$ passes, the timer thread will communicate with the main thread with an event.  That said, the concept of RF is natural and modern browsers have already provided implementations.  Note that although we use Firefox as an example, the concept of RF is abstract and general.  In a multiple-process browser, like Google Chrome, RFs could be represented by processes or threads in multiple processes.  

The concept of RF defines a unit that is smaller than all other web containers, such as HTML frames, origins, JavaScript runtime, and browser tabs.  The advantage of a RF is that we can easily separate three key elements---an adversary, a target secret, and a reference clock---of a timing attack. For example, because a RF is singled-threaded, i.e., can execute only one event, a target secret and an implicit clock tick are separated in different RFs naturally. 




For each RF, one important concept is called {\em an observer}.  In our context, {\em an observer} is defined as a Turing complete program that can make observations, e.g., access the data belonging to the RF or measure the execution status of itself.  Some RFs have observers, such as a JavaScript RF with JavaScript program, and some, such as DOM RFs, do not. A RF with an observer is called a {\it main RF}, and a RF without an observer an {\it auxiliary RF}.  
 For the purpose of defining determinism, we can consider virtual observers residing in auxiliary RFs. 





\subsection{Definition of Determinism}

Let us introduce the definition of determinism, which includes two parts: (i) how to define a deterministic RF, and (ii) how to define a deterministic communication between RFs.   Next, we will show how determinism can prevent timing attacks. 
%
%
 Now we first define a deterministic RF in Definition~\ref{def:det}.


\begin{defi}
\label{def:det} 
{\bf{(Deterministic RF)}} Given a reference frame ($RF$) and an observer ($Ob$)---no matter active or virtual, we define a $RF$ as deterministic if and only if the following holds:

When the $Ob$ makes two observations $O_1$ and $O_2$ at timestamps $t_1$ and $t_2$ measured by the internal clock, $t_2-t_1$ can be represented as a function of $O_1$ and $O_2$, i.e., $t_2-t_1 = f(O_1, O_2)$.
\end{defi}


The core of Definition~\ref{def:det} is that when an observer makes observations, e.g., access the data of the RF or measure the execution of JavaScript, the observer cannot obtain additionally timing information other than what it has already known.  In other words, based on the observations in the RF, the observer can directly deduce the timing, e.g., $t_2-t_1$, without looking at the clock in the RF.  

For the purpose of explanation, let us look at one toy example of defining deterministic RFs.  Specifically, the RF clock ticks based on the following rule: Every time an integer variable $x$ is incremented by 1, the clock ticks by an atomic unit.  This RF is deterministic because the clock is directly defined as a function of a variable, i.e., an observation that the observer can make.  As we can expect, this RF can prevent timing attacks (assuming that external clocks are handled by Definition~\ref{def:comm} below). However, because the clock may not tick if the observer does not increment it, browser functionalities will be broken. 





Beside the internal clock, an observer can also 
%
 obtain other clocks via RF communication.  
  So we define a deterministic communication in Definition~\ref{def:comm}.




\begin{defi} 
\label{def:comm}
{\bf{(Deterministic Communication)}} Given two RFs ($RF^{receiver}$, a deterministic RF, and $RF^{sender}$, another RF) and an observer ($Ob$) in $RF^{receiver}$, we define the communication from $RF^{sender}$ to $RF^{receiver}$ as deterministic if and only if either of the following holds:

(1) When the $Ob$ from $RF^{receiver}$ makes an observation in $RF^{sender}$ at the timestamp $t^{{receiver}}$ in $RF^{receiver}$ (i.e., a communication from $RF^{sender}$ to $RF^{receiver}$ happens), at that moment, the timestamp $t^{sender}$ in $RF^{sender}$ equals to $t^{{receiver}}$ ($t^{{sender}}=t^{{receiver}}$);

(2) $RF^{sender}$ is deterministic.

\end{defi}

Definition~\ref{def:comm} gives two conditions for a deterministic communication between a sender and a receiver, i.e., either (1) two clocks are synchronized or (2) both RFs are deterministic. 
%
%
%
  Note that in order to synchronize two clocks (i.e, the first condition), it is required that the clock in the receiver is behind the one in the sender so that when the communication message from the sender arrives, the message can be delayed until the receiver reaches the time.  Later in Section~\ref{subsec:comm}, we will show that such condition can be easily achieved by adjusting parameters in the clock of the receiver and putting the so-called placeholder in the event queue of the receiver. 
  
  
  Now let us look at and prove Lemma~\ref{thereom:comm} that connects Definition~\ref{def:det} and Definition~\ref{def:comm}.  

\begin{lem} \label{thereom:comm} 
A deterministic RF remains deterministic after communicating with other RFs, i.e., following Definition~\ref{def:det}, if the communication obeys Definition~\ref{def:comm}.
\end{lem}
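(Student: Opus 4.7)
The plan is to reduce the post-communication determinism of $RF^{receiver}$ to the pre-communication property guaranteed by Definition~\ref{def:det}. I would fix two arbitrary observations $O_1$ and $O_2$ made by $Ob$ at receiver timestamps $t_1$ and $t_2$ and show $t_2 - t_1 = f(O_1, O_2)$ for some function $f$. If neither observation reads content that arrived from $RF^{sender}$, Definition~\ref{def:det} applied to $RF^{receiver}$ already finishes the argument. So the real work concentrates on the case where at least one $O_i$ incorporates communicated data, and we must show that the extra content does not introduce any timing quantity the observer could not already recover from $O_1$ and $O_2$.

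For the first branch of Definition~\ref{def:comm}, I would use the clock equality $t^{sender} = t^{receiver}$ at every communication instant. Any time-bearing payload of the observation is, at the moment of observation, literally equal to the receiver's own internal clock reading. Thus the communication cannot deliver a number that sits outside the receiver's own clock trajectory: whatever the observer reads, it reads a value that the receiver's clock already produced at the corresponding $t^{receiver}$. I would then argue that the determinism function $f$ of $RF^{receiver}$ extends unchanged to observations that contain such payloads, since the payload merely restates a quantity that $f$ was already able to reconstruct.

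For the second branch, $RF^{sender}$ is itself deterministic. I would invoke Definition~\ref{def:det} for $RF^{sender}$ to express any sender-side timestamp delta between two observed sender points as a function $g$ of the sender-side observations that $Ob$ sees; these observations are, by construction, components of $O_1$ and $O_2$ from the receiver's point of view. I would then compose: split $t_2 - t_1$ into a purely receiver-internal portion (governed by $f$ from the receiver's own determinism) and a sender-side portion (governed by $g$), giving an overall function of $(O_1, O_2)$. A small bookkeeping step is needed to argue the composition lies in the admissible form of Definition~\ref{def:det}.

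The step I expect to be the main obstacle is pinning down a rigorous model of what it means for an observation to ``involve communication,'' because the informal description in the paper leaves room for ambiguity. Specifically, under the first branch I need to ensure that the synchronization mechanism (delaying sender messages until the receiver's clock catches up, as sketched for the priority queue in Section~\ref{subsec:comm}) does not itself create observable side effects that leak timing; and under the second branch I need to verify that the receiver's view of the sender's observations is faithful enough for the sender's determinism function $g$ to transfer. Once the observation model is fixed, the remaining composition of $f$ with $g$ (or with the identity, in the first branch) should be routine.
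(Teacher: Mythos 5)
Your proposal matches the paper's proof in both structure and substance: the paper likewise splits on the two conditions of Definition~\ref{def:comm}, dismisses the synchronized-clock case on the grounds that no additional timing information is conveyed, and in the deterministic-sender case uses the embedding of sender observations into receiver observations plus exactly your receiver-portion/sender-portion decomposition (its Equations~\ref{eq:mix1}--\ref{eq:mix3}) to exhibit the required function of $(O_1, O_2)$. The modeling ambiguity you flag is real but is left equally informal in the paper's own argument, so there is nothing further to reconcile.
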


\begin{proof}\vspace{-0.05in}
See Appendix~\ref{lemma1} for proof. \vspace{-0.05in}
\end{proof}

Now we can show our important theorem about determinism, i.e., Theorem~\ref{theorem:detdefeattiming}. 

\begin{thrm} \label{theorem:detdefeattiming} 
{\bf{(Determinism prevents timing attacks)}} If a RF with a target secret is deterministic, an adversary observer---no matter in this RF or another RF---cannot infer the target secret. 

Specifically, if the observer measures the ending and starting timestamps ($t_{End}$ and $t_{St}$) of the target secret, the following holds: $\Delta t = t_{End} - t_{St} = const$.
\end{thrm}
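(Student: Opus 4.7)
The plan is to split the proof by the location of the adversary observer relative to the RF holding the target secret, dispatch the same-RF case directly from Definition~\ref{def:det}, and then reduce the cross-RF case to the same-RF case by invoking Lemma~\ref{thereom:comm}.

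First I would handle the case in which the observer lives in the same deterministic RF as the target secret. Let $O_{St}$ and $O_{End}$ denote the observations the observer makes at the start and end of the target secret, at the internal timestamps $t_{St}$ and $t_{End}$. Applying Definition~\ref{def:det} directly gives $\Delta t = t_{End} - t_{St} = f(O_{St}, O_{End})$. The remaining task is to argue that $O_{St}$ and $O_{End}$ are themselves fixed event markers determined only by the identity of the target secret (for example, in the JavaScript RF of Theorem~\ref{th:deteq}, by the opcode sequence executed on either side of the boundary) and not by how long the computation happens to take in physical time. Once this is granted, $f(O_{St}, O_{End})$ is a constant across executions, which is exactly the conclusion.

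Next I would treat the case in which the adversary observer resides in a different RF. Here the observer has no direct access to the clock of the RF holding the target secret; any timestamp it attributes to the start or end of that secret must arrive via inter-RF communication. Assuming every such communication follows Definition~\ref{def:comm} (which is part of the deterministic browser construction), Lemma~\ref{thereom:comm} guarantees that the observer's own RF remains deterministic after these messages are received. The same-RF argument can then be reapplied inside the observer's RF: the timestamps it records for $t_{St}$ and $t_{End}$ are once again functions of fixed observations, so $\Delta t = const$ follows, and hence the observer cannot distinguish different realizations of the target secret.

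The main obstacle, I expect, is the first case's claim that the boundary observations $O_{St}$ and $O_{End}$ carry no hidden timing side information. A priori, an observation could encode residual state produced by other browser activity, which might vary with the real-world duration of the secret. To make this rigorous I would need to restrict what counts as an observation at the secret's boundaries, e.g., designating a pair of event markers tied to the invocation and return of the secret rather than arbitrary memory reads, and then argue that $f$ evaluated on these markers depends only on the opcode-level description of the secret. Once this is formalized, both cases conclude immediately and Theorem~\ref{theorem:detdefeattiming} follows.
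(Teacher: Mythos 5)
Your proposal is correct and takes essentially the same route as the paper: the same two-case split (observer co-resident with the secret, dispatched directly from Definition~\ref{def:det}; observer in another RF, reduced to the first case via Definition~\ref{def:comm} and Lemma~\ref{thereom:comm}). The ``obstacle'' you flag---that the boundary observations must be fixed across executions---is exactly what the paper assumes by comparing two copies of the same RF with identical code and inputs, i.e., $O^{RF1}_{St}=O^{RF2}_{St}$ and $O^{RF1}_{End}=O^{RF2}_{End}$, so your argument coincides with the paper's.
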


\begin{proof} \vspace{-0.05in}
See Appendix~\ref{theorem1} for proof.  \vspace{-0.05in}
\end{proof}

Theorem~\ref{theorem:detdefeattiming} says that the observer will always obtain the same timing information when measuring the target secret.  Specifically, determinism normalizes the target secret so that the delta between the ending and starting timestamps of the target secret, from the viewpoint of an observer, is always the same, i.e., deterministic, even if the target secret happens multiple times in different runtime environments.  In other words, we prove that {\it determinism prevents timing attacks}.





\section{Reference Frames}
\label{sec:referenceframe}


In this section, we will introduce how to make RFs in current browsers deterministic. The ideal yet simple solution is to make every RF deterministic, but such solution is impossible 
in some scenarios.  For example, a network message may contain the information about the physical clock, and there is no way to make such RF deterministic (as it belongs to the external world).  The good news is that according to Theorem~\ref{theorem:detdefeattiming}, we only need to make RFs with target secrets deterministic to prevent timing attacks.  Now let us look at several examples of RFs. 




\subsection{JavaScript Main RFs} \label{subsec:jsrf}


JavaScript main RF needs to be deterministic because the performance of JavaScript execution as shown in Figure~\ref{fig:example} is a target secret to the JavaScript itself.  In this subsection, we start from some background knowledge, present our core definition of a deterministic JavaScript RF, and then use  Figure~\ref{fig:example} as an example to show how timing attacks are prevented. 



\subsubsection{Background}


To execute a JavaScript program, a browser---particularly the JavaScript engine part---will parse and convert the JavaScript program into a special form called operation code (opcode), sometimes referred as bytecode as well. 
 Then, depended on the execution mode, i.e., interpreter or just-in-time (JIT) compilation, the opcode will be interpreted or converted to machine code for execution.  


Note that although the definition of opcodes is specific to the type of the browser, such as Firefox and Google Chrome, determinism can be associated with any set of opcodes. In our prototype implementation, we modified Firefox browser and thus used the definition of opcodes in SpiderMonkey, the Firefox JavaScript engine. 


\subsubsection{Deterministic JavaScript RF} 
We will give an alternative definition of determinism in JavaScript RF in Definition~\ref{def:jsdet}.  This alternative definition for JavaScript is similar to the special definition~\cite{deterland} that researchers used in lower level, such as virtual machines.  We will then use Theorem~\ref{th:deteq} to show that this definition is equivalent to our general determinism definition of RFs (Definition~\ref{def:det}) in the context of JavaScript engine. 

\begin{defi} \label{def:jsdet} 
{\bf (Deterministic JavaScript RF)}  Given a set of opcodes ($SO$) generated from a program, a fixed initial state, and a set of fixed inputs, a JavaScript RF is deterministic if and only if the followings hold for any two executions ($E^1$ and $E^2$) on different runtime environments:


For every opcode ($op\in SO$) in $E^1$ and $E^2$, $t_{op}^1 - t_{op}^2=C$, where $t_{op}^k$ is the timestamp when the opcode $op$ is executed in $E^k$, and $C$ is a constant related to only the starting time of $E^1$ and $E^2$.


\end{defi}

\begin{thrm} \label{th:deteq}
{\bf (Determinism Definition Equivalence)} In the context of JavaScript engine, Definition~\ref{def:jsdet} is equivalent to Definition~\ref{def:det}.
\end{thrm}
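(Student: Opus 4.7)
The plan is to prove the equivalence by establishing both implications separately, using the fact that in a JavaScript RF the observer is itself the JavaScript program, so any observation corresponds to the execution of some opcode, and the observer's observable values are a deterministic function of the opcode sequence (given fixed initial state and inputs, as stipulated by Definition~\ref{def:jsdet}). The bridge between the two definitions is therefore the identification ``observation $=$ read performed at a particular opcode'', which lets me convert between statements about observation timestamps and statements about opcode timestamps.

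First I would prove Definition~\ref{def:jsdet} $\Rightarrow$ Definition~\ref{def:det}. Fix an observer, i.e., a JavaScript program with fixed initial state and inputs. Any observation $O_i$ it makes is performed while executing some opcode $op_i$, and because the engine is deterministic in state, the value $O_i$ depends only on $op_i$ and the execution prefix, not on the runtime environment. By Definition~\ref{def:jsdet}, $t_{op_i}^1 - t_{op_i}^2 = C$ for the same constant $C$ across all opcodes, so $t_2 - t_1 = t_{op_2} - t_{op_1}$ is the same in every run. Hence the assignment $f(O_1, O_2) := t_2 - t_1$ is well-defined on the pairs of observations that actually arise, and $t_2 - t_1 = f(O_1, O_2)$, giving Definition~\ref{def:det}.

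For the converse, I would assume Definition~\ref{def:det} and consider two executions $E^1, E^2$ of the same program. The key trick is to instantiate the observer with a ``probe'' JavaScript program that makes an observation at each opcode of interest; the observations recorded at opcode $op$ are identical in $E^1$ and $E^2$ because JS execution on a fixed program/state/input produces the same state trace. Applying $t_2 - t_1 = f(O_1, O_2)$ to any pair of opcode-observations $op_a, op_b$ yields $t_{op_b} - t_{op_a}$ being the same in both executions, so $t_{op_b}^1 - t_{op_b}^2 = t_{op_a}^1 - t_{op_a}^2$. Since this holds for every pair, the difference $t_{op}^1 - t_{op}^2$ is a constant $C$ independent of $op$, and by construction $C$ depends only on the starting times of $E^1, E^2$. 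This is exactly Definition~\ref{def:jsdet}.

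The main obstacle I expect is the conceptual step, not the algebra: I have to justify that an ``observation'' in Definition~\ref{def:det} can, without loss of generality, be taken to be a read performed at a specific opcode, and that its value is invariant across runtime environments whenever the source program, initial state, and inputs are fixed. This requires appealing to the determinism of the JavaScript engine at the opcode level (everything except timing is a deterministic function of the opcode stream) and to the fact that the observer in a JS RF is itself a Turing-complete JavaScript program, so it can always be replaced or augmented with a probing observer that records the state at any chosen opcode. Once this correspondence is stated carefully, both directions reduce to transporting the constant-offset condition between ``opcode timestamps'' and ``observation timestamps'', which is immediate.
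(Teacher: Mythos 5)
Your proof is correct and takes essentially the same route as the paper's: both directions rest on the same bridge---identifying observations with opcode-level events whose values are invariant across executions with fixed program, initial state, and inputs (the paper states this as ``the observations in a JavaScript RF are the executed opcodes'')---and the same algebra of transporting the constant offset $C$, anchored at the starting point. The only cosmetic difference is that for Definition~\ref{def:jsdet}$\Rightarrow$Definition~\ref{def:det} the paper routes through a ``standard'' execution $E_{sd}$ and expresses $f(op_i,op_j)$ as a sum of per-opcode durations, whereas you define $f$ directly as the run-independent value of $t_2-t_1$; these are interchangeable.
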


\begin{proof} \vspace{-0.05in}
See Appendix~\ref{app:deteq} for proof.  \vspace{-0.05in}
\end{proof}


Now let us look at Definition~\ref{def:jsdet}, which says that JavaScript operations in different executions of a deterministic RF will follow a determined distribution pattern over time axis when inputs are fixed and span over time in a fixed pattern.  That is, if one makes a translation from one execution to the starting point of another, the two executions look the same.



There are two methods to fix the execution pattern for JavaScript.  First,
%
%
 one intuitive method is to still use physical clock in this RF but arrange the opcode execution sequence following a  pre-determined pattern.  However, such arrangement will slow down a faster execution, and cannot make up for a slower execution (as the clock does not wait for a slower execution).

 Second, what is being used in this paper is that instead of changing the execution, we change the clock in this RF so that the perceived execution pattern from the viewpoint of the observer, i.e., the JavaScript program, is fixed, although we, such as oracle users knowing the physical time, see the execution pattern differently.  
 

\subsubsection{An Example of Deterministic JavaScript RF}
 Given Definition~\ref{def:jsdet}, there are many possible clock definitions in deterministic JavaScript RFs, which have no differences from the perspective of preventing timing attacks. 
 Now, considering simplicity and performance overhead, we define a specific clock used in our implementation, which ticks based on the following criteria:









($i$) When there are opcodes running, the clock ticks with regards to the executed opcode.  That is, we have the following equation, 
$t_{now} = t_{start} + \sum_{op\in EJO}{unit_{op}}$, 
where $EJO$ is the set of executed JavaScript opcodes and $unit_{op}$ is the atomic elapsed time for that opcode. For simplicity, if we normalize $unit_{op}$ as $unit$, we will have 
 Equation~\ref{eq2}. 
\begin{equation} 
t_{now} = t_{start} + count_{EJO}\times unit \label{eq2} 
\end{equation}

($ii$) When there are no opcodes running, the clock can tick based on any criterion, because there is no target secret in the JavaScript RF---as JavaScript is not executing---and more importantly no observer in the JavaScript RF measuring target secrets. 
 One can consider a JavaScript RF in this state as a blackbox that no observers know what happens inside (thinking about Schr\"{o}dinger's cat). 
 The key for such state is that when another RF communicates with the JavaScript RF, we need to synchronize clocks in both RFs (Definition~\ref{def:comm}).  The reason is that when there exists communication, e.g., when a callback is invoked, an observer will be present.  We will discuss more details about the communication in Section~\ref{subsec:comm}.  




The aforementioned clock makes JavaScript RF deterministic.  
  Assume that we have two executions of the same JavaScript at $t^1_{start}$ and $t^2_{start}$ separately.  For an arbitrary opcode $op$, we will have $t^1_{op}-t^2_{op}=(t^1_{start} +  count_{EJO}\times unit) - (t^2_{start} +  count_{EJO}\times unit)  = t^1_{start} - t^2_{start} = C \label{eq4}$, thus satisfying Definition~\ref{def:jsdet}.

\subsubsection{Preventing the Synchronous Attack in Figure~\ref{fig:example}}  \label{subsec:defeatsyncattack}

Now let us use Figure~\ref{fig:example} to discuss how the synchronous attack (Line 1--21) is prevented in JavaScript main RF.  For the simplicity of explanation, we assume that one JavaScript statement will form into one opcode.  In this case, the return value of the $nextEdge$ function (Line 1--6) is always 1, because for each JavaScript statement (Line 4--6), the clock of the JavaScript RF will tick forward by $unit_{op}$, the unit time for an opcode.  Then, $start$ is not equal to $performance.now()$ at Line 6, and the function just returns 1 at Line 7.  For JavaScript fingerprinting, the $fingerprint$ is fixed as 1 (Line 11).  For a side-channel attack, there is only one opcode between Line 16 and Line 18. Therefore, $stop-start$ equals $unit_{op}$, and $fingerprint-remain$ equals 0, which makes $duration$ as a constant number, i.e., $unit_{op}$.

\subsection{Auxiliary RFs} \label{subsec:aurf}

In this part of the section, we discuss how to make auxiliary RFs deterministic. Similar to JavaScript main RF, an auxiliary RF  needs to be deterministic if it has a target secret. 
%
%
%
%
%
 Now let us look at several examples of auxiliary RFs.  Note that this is an illustration but not a complete list.  Just as what Tor Browser~\cite{torbrowser} and Fermata~\cite{fuzzyfox} did to add noise at various places, we also need to investigate the same places (RFs) in browser and make them deterministic.  



\subsubsection{DOM Auxiliary RF}  A DOM auxiliary RF is attached to a JavaScript main RF: Such RF is created by 
the invocation of DOM operation via JavaScript and destroyed when the DOM operation finishes.  
 We need to make DOM auxiliary RF deterministic, because the execution time of a DOM operation can be used to infer the size of the resource involved in the operation~\cite{VanGoethem:2015:CST:2810103.2813632}.  The clock in a DOM auxiliary RF will inherit from the JavaScript one, when the DOM RF is created by the JavaScript. Then, the clock in the DOM auxiliary RF will tick for a constant time, i.e., when the DOM operation returns---no matter synchronously via a function call or asynchronously via an event, the clock is incremented by a constant time.  Note that when there is no communication between DOM and JavaScript, because there are no observers, the clock in the DOM RF is in a uncertain state. 




\subsubsection{Networking Auxiliary RF}  A networking auxiliary RF is also attached to the JavaScript main RF.  Networking auxiliary RFs are created by an HTTP request, and destroyed by an HTTP response, both initiated from the JavaScript main RF. 
 Networking auxiliary RF is 
 deterministic for cross-origin requests, but remains non-deterministic for same-origin ones.  The reason is as follows.  If a request is from cross origin, the initiator, i.e., the JavaScript, can infer the size of the response based on the time of processing the response~\cite{VanGoethem:2015:CST:2810103.2813632}.  That is, the networking RF contains a target secret.  Specifically, the clock in such RF always ticks for a constant time between the request and the response.   As a comparison, if a request is from the same origin as the JavaScript, the JavaScript by any means has access to the response, i.e., there is no target secret in the RF.  Because an external server may embed the physical time in the response, we will let the clock in same-origin networking RF tick based on the physical time.  Note that the communication between same-origin networking and JavaScript RFs is still deterministic per Definition~\ref{def:comm} so that the JavaScript RF cannot obtain the physical time.  



\subsubsection{Video Auxiliary RF}  Video auxiliary RF is created when a JavaScript renders a video on an HTML5 canvas as an animating texture. According to  Kohlbrenner and Shacham~\cite{fuzzyfox}, such video rendering can be used as an implicit clock via $requestAnimationFrame$, because most modern browsers render video in a 60Hz frequency.   That is, one can infer the physical time based on the current displayed video frame.   We do not need to make a video RF deterministic, i.e., the clock in such RF ticks based on physical time, because it contains no target secrets.  Note again that the communication between the video auxiliary and other deterministic RFs is still deterministic.  

\subsubsection{WebSpeech or WebVTT Auxiliary RF} WebSpeech and WebVTT are another two venues of implicit clocks in the modern browsers as mentioned by Kohlbrenner and Shacham~\cite{fuzzyfox}.  A WebSpeech RF with a $SpeechSynthesisUtterance$ object is created by a $speak()$ method, and destroyed by a $cancel()$ method which fires a callback with a high resolution duration of the speech length.  A WebVTT is a subtitle API that can specify the time for a specific subtitle and check for the currently displayed subtitle. Similar to the video auxiliary RF, because there are no target secrets, both RFs are non-deterministic and the clocks in these two RFs tick based on the physical time. 










\section{Communication between RFs} \label{subsec:comm}


In this section, we discuss, when one RF communicates with another, how to synchronize clocks according to  Definition~\ref{def:comm}. 
 Before explaining our technique, i.e., a priority queue, we first need to understand how the RF communication 
 works in legacy browsers. 

The RF communication is handled by a so-called event loop~\cite{eventloop}, in which a loop keeps fetching events from a queue structure called {\it event queue}.  
 Let us again use $setTimeout$ to explain the RF communication with the {\it event queue}.  When the main RF calls $setTimeout$, the main RF will dispatch an event into the event queue of the timer RF, i.e., launching a communication from the main RF to the timer RF.  The timer RF will fetch the event, process it (i.e., waiting in this example), and then dispatch another event, called a callback, back to inform the main RF the completion of $setTimeout$, i.e., launching a communication from the timer RF to the main RF. 
 
 
In our paper, we replace the event queue in the main RF with a priority queue. 
Specifically, we reserve a callback place in the queue for events that the main RF dispatched to other RFs.  When the callback is dispatched back to the main RF, the callback is synchronized at the reserved place following Definition~\ref{def:comm}.  Take a deterministic auxiliary RF for example.  The duration of processing an event in such RF always equals to a fixed value from the perspective of the main RF observer.
  To achieve that, the main RF pre-assigns an expected delivery time for every callback as a priority in the queue so that the callback will be only delivered at that specific, fixed time from the viewpoint of the main RF observer. One important property of this priority queue is that 
 when multiple events happen in many deterministic RFs, the queue will arrange all the callbacks in a pre-determined, fixed sequence.  Now let us look at the details. 

\subsection{Priority Queue} \label{subsec:pqueue}


\begin{figure}
\begin{center}
\centering
\includegraphics[width=0.8\columnwidth]{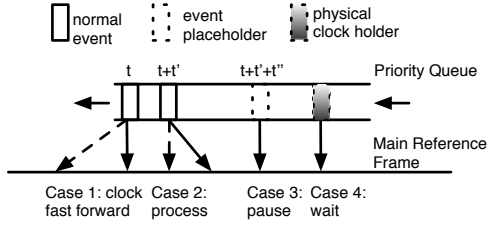} 
\vspace{-0.2in}
\caption{Mechanism of Priority Queue in the Main RF. }
\label{fig:queue}
\vspace{-0.25in}
\end{center}
\end{figure}


Now let us look at how the main RF interacts with the priority queue.  Similar to the current event model, when the main RF is idle, it will try to fetch an event from the queue.  In our model, the event with the highest priority, i.e., the one with the earliest delivery time, will be returned to and processed by the main RF.  
 There are two sub-scenarios (Case 1\&2 in Figure~\ref{fig:queue}) here.  First, when the expected delivery time is ahead of the clock in the main RF (called the main clock), the browser kernel will move the main clock to the expected time, and let the main RF process the event.  Second, when the expected delivery time is behind the clock in the main RF (i.e., the main RF was processing other events at the expected delivery time), the main RF will directly process the event without changing the clock. 

Note that both cases follow the deterministic definition of communication in Definition~\ref{def:comm}.  
 For Case 1, because two clocks are synchronized at the delivery time, the communication is deterministic. 
  For Case 2, we can 
 consider that the event is delivered at the expected delivery time following Definition~\ref{def:comm}.  However, because JavaScript is a single-thread language, the main RF cannot process the event when busy.  That is, the main RF has to postpone the processing of the event to the time when it is idle.  Thus we can combine the delivery and the processing, because even if the event is delivered, the main RF cannot notice and process the event. 

Besides normal events, there are two special events (Figure~\ref{fig:queue}): an event placeholder and a physical clock holder.
 An event placeholder is a virtual slot indicating that a real event should be at this place in the queue, but the event has not been finished in the auxiliary RF. When the main RF fetches an event placeholder (Case 3), the main RF needs to pause its own clock and wait until the event arrives or is canceled (e.g., timeout).  If the event arrives, it will replace the event holder, and the main RF will process the event; if the event is canceled, the main RF will process the next event in the queue.  

A physical clock holder is a mapping from the current physical time to the main RF clock time. Because the physical clock keeps ticking, the place of the holder also changes, i.e., the priority of the holder will be lowered constantly.  The usage of such holder is two-fold. First, when an auxiliary RF with the physical time tries to communicate with the main RF, the browser kernel will add the event to the holder's place in the queue. Second, in a very rare case (Case 4), when the main RF tries to fetch the physical clock holder, i.e., the main clock ticks faster than the physical one, the browser kernel will pause the main RF until the main clock is behind the physical one.  The reason is that an event with physical time can come at any time, and if the main clock is faster, the browser kernel cannot synchronize the main clock with the physical one.  To prevent such scenario, we can set the $unit$ in Equation~\ref{eq2} to a very small number so that the main clock cannot catch up with the physical one in the fastest machine. 



\subsection{RF Communication} \label{subsubsec:commmainaux}
We now look at how the priority queue can be used in the communication between RFs. 

\subsubsection{Main and Auxiliary RF Communication}  Let us first look at the communication between main and auxiliary RFs.  First, say the auxiliary RF is deterministic, and Figure~\ref{fig:masync} shows how it works.  Before creating such auxiliary RF, the main RF first put a placeholder in the priority queue (Step 1).  Because the auxiliary RF is deterministic, the main RF can predict the expected delivery time.  Then, the auxiliary RF is started (Step 2) and keeps running until it finishes and tries to deliver an event to the main RF.  The event will be put in the priority queue, replacing the previous placeholder (Step 3). At the expected delivery time in the scale of the main clock, the event is delivered to the main RF (Step 4).  If the main RF is busy, the execution might be delayed (Step 5).  As mentioned, if the execution is delayed, the main RF can fetch the event at the delayed time.

Second, say the auxiliary RF is non-deterministic.  Examples of such RFs include these created by the main RF (e.g., same-origin networking RF) and these created by the user (e.g., mouse or keyboard RF).  
 When such RF is created, because we cannot predict the expected delivery time, no placeholder is created.   If the auxiliary RF tries to communicate with the main RF, i.e., an event is fired, the event will be put at the place of the physical clock holder in the priority queue.   




\begin{figure}
\begin{center}
\centering
\includegraphics[width=\columnwidth]{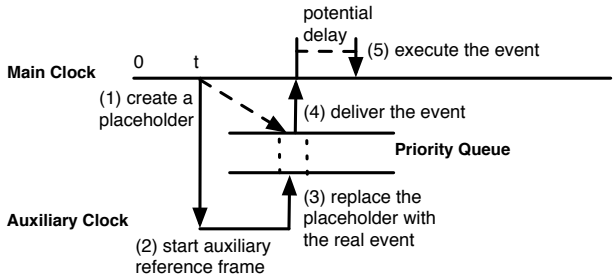} 
\vspace{-0.4in}
\caption{Communication between Main and Auxiliary RFs. }
\label{fig:masync}
\vspace{-0.25in}
\end{center}
\end{figure}





\subsubsection{Communications between Two Main RFs}
 We then discuss how two main RFs communicate with each other using the priority queue, which follows Definition~\ref{def:comm}.  
 There are two scenarios in this communication: (i) the sender's clock is ahead of the receiver, and (ii) the sender's clock is behind.  First, when the sender's clock is ahead, the browser kernel can put the communication message, i.e., an event, in the priority queue at the position of a future time. Then, when the event is executed, two clocks are synchronized---everything is the same as the communication between a main and auxiliary RF, and follows the first condition in Definition~\ref{def:comm}. 

Second, when the sender's clock is behind, the browser kernel will put the communication event in the front of the priority queue, i.e., deliver the event immediately.  In such case, the receiver knows two clocks, i.e., the sender's and its own.  Let us first consider that the sender has a target secret, i.e., the sender's RF is deterministic.  In such case, the communication follows the second condition in Definition~\ref{def:comm}.


Let us then consider that the sender's RF contains no target secrets and is not deterministic.  An example is that the main RF is not running JavaScript and just gets synchronized with an auxiliary RF with a physical clock.  
 Such case contradicts with our assumption in the beginning that the sender's clock is behind the receiver's, because as shown in Section~\ref{subsec:pqueue}, relying on the physical clock holder, the clock in a deterministic main RF, i.e., the one in the receiver, is always behind the physical clock. 
  Therefore, the scenario boils down to our first case, where the sender's message is put in the priority queue, and both clocks are synchronized.

\subsection{Preventing the Asynchronous Attack in Figure~\ref{fig:example}}  \label{subsec:defeatasyncattack}

Now let us look at how the priority queue can help to prevent the asynchronous attack (Line 22--28) in Figure~\ref{fig:example}.  In one sentence, when the adversary measures the duration of the asynchronous target secret using the reference clock (i.e., $setInterval$ in this example), no matter how long the target secret takes to finish, the adversary will always obtain a fixed, deterministic value.

Here are the details. 
 In the $sideChannelAsync$ function, when the asynchronous target secret is invoked in function $targetSecretAsync$ (Line 26) say at $t_{init}$, the browser kernel will put an event placeholder in the priority queue with an expected delivery time, say at $t_d$, and create a deterministic RF to execute the asynchronous secret. That is, $t_d - t_{init}$ is a constant due to determinism.  
 Next when the $setInterval$ function is invoked, the browser kernel will create another RF for waiting and put an event placeholder in the priority queue with an expected delivery time at $t_{init}+u$ where $u$ is the variable at Line 27. Because the implicit clock introduced by $setInterval$ is to measure $targetSecretAsync$, $u$ should be much smaller than $t_d - t_{init}$.  That is, the event placeholder for $setInterval$ should be ahead of the one for the target secret in the priority queue.  Therefore, function $countFunc$ (Line 23) is executed, and then another event placeholder for $setInterval$ is created.  After $countFunc$ is executed for $round((t_d-t_{init})/u)$ times, the callback for the target secret, i.e., $callback$, is executed.  Note that if the target secret finishes execution, the placeholder for the target secret is replaced by the callback, which can be executed immediately; if not, the JavaScript RF will wait there for the placeholder to be replaced by the callback.

In sum, we have three RFs in this example: JavaScript RF, DOM RF for $setTimeout$, and a deterministic RF for executing the target secret.  The duration measured by $callback$ is always $round((t_d-t_{init})/u) * u$, a deterministic value. Therefore, the asynchronous attack is prevented.










\section{Implementation}

We have implemented a prototype of deterministic browser called \sys with 1,687 lines of code by modifying Firefox nightly 51.0a1 at 40 different files.  The implementation is available at this repository (\url{https://github.com/nkdxczh/gecko-dev/tree/deterfox}).  Let first look at the implementation RFs. As mentioned, legacy Firefox already has RF implementations via OS level threads (xpcom/threads/nsThread.cpp).  Different RFs are different subclasses of nsThread, such as TimerThread (xpcom/threads/TimerThread.cpp). Now we introduce the core part of our implementation: deterministic JavaScript RF and priority queue.


 Let us first look at the deterministic JavaScript engine (js/src).  We associate a counter with each JavaScript context in SpiderMonkey, the Firefox JavaScript engine of Firefox. When $performance.now$ is invoked, the counter multiplying a unit time will be returned.  Note that some Firefox scripts, e.g., these with ``chrome://'' and ``resource://'' origins, are also running in the same context as JavaScript from the website.  We create separate counters for such Firefox scripts.  
 Here is how the counters are incremented. 
 Specifically, SpiderMonkey has three modes of executing JavaScript code, one interpreter (js/src/vm/Interpreter.cpp) and two just-in-time (JIT) compilation modes (IonMonkey at js/src/jit/Ion.cpp and BaselineJIT at js/src/jit/BaselineJIT.cpp). 
  We increment the counter in all the three modes. 
  In the interpreter mode, the counter will be incremented for each opcode; in both JIT modes, the counter will be incremented based on the compiled JavaScript block. Our current implementation does not add the counter in the compiled code directly but before and after the compiled code execution.  Note that this implementation is still deterministic, because it makes the JavaScript execution follow a certain pattern over time. 

We then change the event queue for the deterministic communication. For each thread in Firefox, other threads can dispatch a runnable to that thread and put (PutEvent method in xpcom/threads/nsThread.cpp) the runnable in the event queue (mEventsRoot object in xpcom/threads/nsThread.cpp).  Note that there are many queues in Firefox, and these queues are sometimes hierarchical, i.e., one queue may dispatch events to another queue.  This event queue that we talk about 
 is directly associated with a thread, i.e., the inner level queue.  


\section{Evaluation}


We evaluate \sys based on the following metrics:

\begin{icompact}
\item {Robustness to timing attacks.}  \hspace{0.01in} We evaluate \sys and other existing browsers including commercial ones and research prototype against existing timing attacks presented in the literature. 
\item {Performance overhead.} \hspace{0.01in} We evaluate the performance overhead of \sys from the perspective of a user of the browser, i.e., based on the physical time obtained from a standard Linux machine. 
\item {Compatibility.} \hspace{0.01in} We evaluate the compatibility of \sys using two tests: an automated browser testing framework called Mochitest, and the rendering of Top Alexa websites. 
\end{icompact}

\subsection{Robustness to Timing Attacks}

In this part of the section, we evaluate the robustness of existing browsers and \sys against timing attacks.  Formal proof about why a deterministic browser prevents timing attacks can be found in Theorem~\ref{theorem:detdefeattiming}.  In this subsection, we focus on the empirical evaluation of \sys and other prior arts against timing attacks presented in the literature.  Specifically, four existing browsers---namely Firefox, Google Chrome, FuzzyFox~\cite{fuzzyfox},\footnote{FuzzyFox is a prototype implementation of Fermata, and we use the version downloaded from their repository with the default parameters. } and Tor Browser~\cite{torbrowser}---are used for comparison, which range from commercial browsers to research prototype.  
 An overview of the evaluation can be found in Table~\ref{tab:robustness}.

\begin{table}[t]
\centering
\caption{Robustness of Five Browsers against Different Attacks. (``Clock-edge''~\cite{fuzzyfox}; ``Clock-edge-m'': an modified version of ``Clock-edge''; ``Img ($S$)'': the image loading side-channel attack~\cite{VanGoethem:2015:CST:2810103.2813632} inferring two files with $S$ size difference; ``Script ($S$)'': the script parsing side-channel attack~\cite{VanGoethem:2015:CST:2810103.2813632} inferring two files with $S$ size difference;  ``Script-implicitClock (2M)'': a modified version of ``Script (2M)'' using $setInterval$ as an implicit clock; ``Cache attack'': a side-channel attack~\cite{Oren:2015:SSP:2810103.2813708}; ``Cache-m'': a covert channel modified from ``Cache attack''; ``SVG Filtering'': the SVG filter attack from Stone~\cite{stone2013pixel}.)}
\label{tab:robustness} \vspace{-0.1in}
{\sffamily\fontsize{5.5}{9}\selectfont
\begin{tabular}{l*5c}
\toprule
 & Chrome&Firefox&Tor Browser&FuzzyFox&DeterFox\\
\midrule
Clock-edge~\cite{fuzzyfox} &$\bf \times$&$\bf \times$&$\bf \times$&$\checkmark$  & $\checkmark$  \\
Clock-edge-m &$\bf \times$&$\bf \times$&$\bf \times$&$\bf \times$  & $\checkmark$  \\
Img (100K)~\cite{VanGoethem:2015:CST:2810103.2813632} &$\bf \times$&$\bf \times$&$\checkmark$&$\checkmark$  & $\checkmark$  \\
Img (5M)~\cite{VanGoethem:2015:CST:2810103.2813632} &$\bf \times$&$\bf \times$&$\bf \times$&$\bf \times$  & $\checkmark$  \\
Script (100K)~\cite{VanGoethem:2015:CST:2810103.2813632} &$\bf \times$&$\bf \times$&$\checkmark$&$\checkmark$  & $\checkmark$  \\
Script (2M)~\cite{VanGoethem:2015:CST:2810103.2813632} &$\bf \times$&$\bf \times$&$\bf \times$&$\bf \times$  & $\checkmark$  \\
Script-implicitClock (2M) &$\bf \times$&$\bf \times$&$\bf \times$&$\bf \times$  & $\checkmark$  \\
Cache attack~\cite{Oren:2015:SSP:2810103.2813708} &$\bf \times$&$\bf \times$&$\checkmark$&$\checkmark$  & $\checkmark$  \\
Cache-m &$\bf \times$&$\bf \times$&$\bf \times$&$\bf \times$  & $\checkmark$  \\
SVG Filtering~\cite{stone2013pixel} &$\bf \times$&$\bf \times$&$\bf \times$&$\bf \times$  & $\checkmark$  \\
\bottomrule
\end{tabular}

\scriptsize
Note: $\checkmark$ means the browser is robust to the attack, and $\bf \times$ not.
}
\vspace{-0.10in}
\end{table}

\subsubsection{Clock-edge Attack}
 A synchronous version of the clock-edge attack~\cite{fuzzyfox} is presented in Section~\ref{sec:threatmodel}.  Apart from that version, we also design another version specifically targeting FuzzyFox.  The modified version runs an operation for a considerable long time and calculate the difference between two $performance.now$ asynchronously. 
  The attack is repeated for ten times to average out the jitter added by FuzzyFox. 
  Additionally, because two $performance.now$ are obtained asynchronously, they do not equal to each other according to the FuzzyFox paper.

The first two rows of Table~\ref{tab:robustness} shows the results of both the original and modified version of the clock-edge attack.  All browsers except \sys and FuzzyFox are vulnerable to the original version, i.e, the minor clock can be used to measure a target secret with $\mu s$ accuracy.  Both \sys and FuzzyFox will show a constant time for the minor clock, making it unusable for measurement of target secrets.  For the modified version, all browsers except \sys are vulnerable.  We also test five browsers on the clock-edge attack for the fingerprinting purpose (see Section~\ref{sec:threatmodel}), and the results are the same as the one for the measurement purpose.  

\begin{figure}
\begin{center}
\centering
\includegraphics[width=0.85\columnwidth]{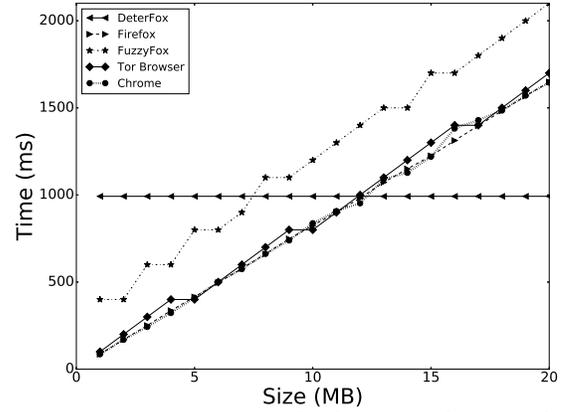} 
\vspace{-0.15in}
\caption{Script Parsing Attacks from Goethem et al.~\cite{VanGoethem:2015:CST:2810103.2813632} (We change the size of parsed scripts in the attack, and measure the time to trigger an ``onerror'' event. Each point in the graph is the median value of nine repeated tests. Note that all timestamps are obtained via JavaScript.)}
\label{fig:jsoverhead}
\vspace{-0.15in}
\end{center}
\end{figure}

\subsubsection{Side-channel Attacks from Goethem et al~\cite{VanGoethem:2015:CST:2810103.2813632}}  We evaluate the robustness of browsers against two side-channel attacks from Goethem et al.~\cite{VanGoethem:2015:CST:2810103.2813632}, namely script parsing and image loading.  Both attacks try to load a file---which is not image or script and from a different origin---and measure the time between the start and when the ``onerror'' event is triggered.  Details can be found in their paper. 

In the evaluation, we change the file size from 1MB to 20MB and observe the loading or parsing time.  The experiment for each file size is repeated for 15 times, and we discard the first six results, because the browser is sometimes busy loading system files during startup, which affect the result.  Then, we obtain the median value of the rest nine results for each file size and show the result in Figure~\ref{fig:jsoverhead} and~\ref{fig:imgoverhead}.  Note that the file size we use is larger than the one in Goethem et al.~\cite{VanGoethem:2015:CST:2810103.2813632}, because both FuzzyFox and Tor Browser add noises to the loading or parsing time so that it is hard to differentiate files with small sizes.  However, the file size that we use is still reasonable for normal web communications, such as for email attachment.  Now let us look at the results for script parsing and image loading separately. 


Figure~\ref{fig:jsoverhead} shows the results of script parsing attack for five browsers.  First, the parsing time for \sys is a constant, deterministic number, i.e., \sys is entirely robust to the script parsing attack. 
 Second, the parsing time for both legacy Firefox and Google Chrome is linear to the file size, confirming the results reported by Goethem et al.  Lastly and more importantly, the parsing time for both Tor Browser and FuzzyFox is a stair step curve with regards to the file size.  That is, both browsers can defend against such parsing attack in small scale when the file sizes differ a little, but fail when the file sizes differ a lot, e.g., with 2MB differences (Table~\ref{tab:robustness}). 


\begin{figure}
\begin{center}
\centering
\includegraphics[width=0.85\columnwidth]{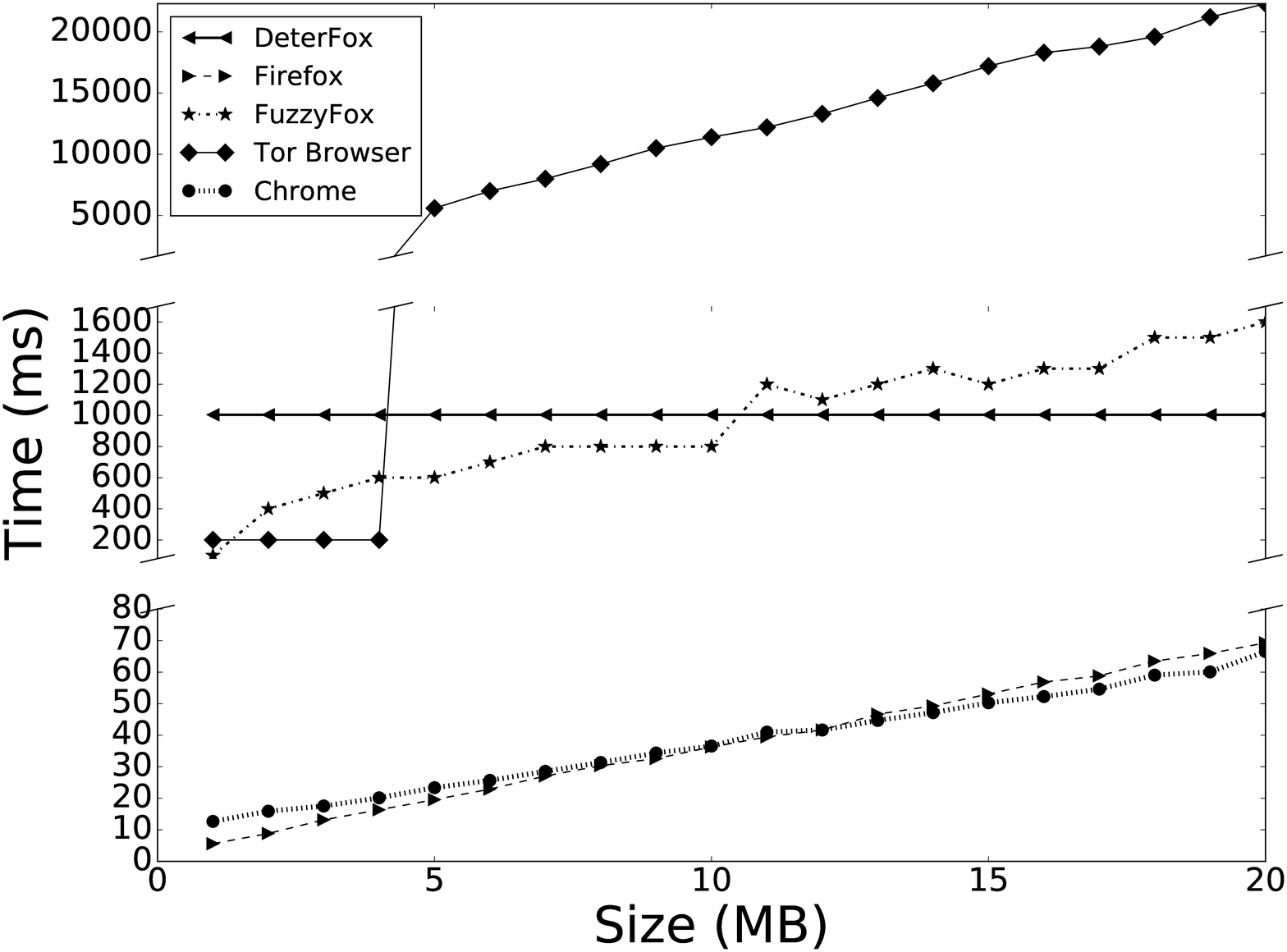} 
\vspace{-0.15in}
\caption{Image Loading Attacks from Goethem et al.~\cite{VanGoethem:2015:CST:2810103.2813632} (We change the size of parsed scripts in the attack, and measure the time to trigger an ``onerror'' event. Each point in the graph is the median value of nine repeated tests. Note that all timestamps are obtained via JavaScript.)}
\label{fig:imgoverhead}
\vspace{-0.15in}
\end{center}
\end{figure}

Figure~\ref{fig:imgoverhead} shows the results of image loading attack for five browsers.  First, similarly the loading time for \sys is a constant, showing that \sys is robust to such image loading attack, and both legacy Firefox and Google Chrome are vulnerable to the attack because the loading time increases as the file size linearly.  Second, the parsing time for FuzzyFox is still a stair step curve with a few fluctuations.  That is, FuzzyFox is also vulnerable to such attack if the file size differs much, e.g., with 5MB differences (Table~\ref{tab:robustness}).  Lastly, the loading time for Tor Browser is interesting, because the time stays as a constant number when the file size is below 4MB, but increases linearly when the file size is above 5MB.  That is, Tor Browser is also vulnerable to the attack when loading a large file. 

Note that in addition to using $performance.now$ to measure the loading time, following the asynchronous attack in Figure~\ref{fig:example}, we also tried $setInterval$ as an implicit clock for the script parsing attack with 2M file size difference.  The result (``Script-implicitClock (2M)'' line in Table~\ref{tab:robustness}) shows that only \sys can defend against the attack with the $setInterval$ implicit clock---this is expected as only \sys can defend against the original script parsing attack with 2M file size difference. 

\subsubsection{Cache Attacks from Oren et al~\cite{Oren:2015:SSP:2810103.2813708}} Oren et al.~\cite{Oren:2015:SSP:2810103.2813708} propose a practical cache-based side-channel attack using JavaScript, which can correctly recognize the website running in another tab by creating a memory access pattern with zeros and ones called memorygrams. We re-implement their attack 
 and test all five browsers against the attack.  As shown in the seventh row of Table~\ref{tab:robustness}, both Chrome and Firefox are still vulnerable, but Tor Browser, FuzzyFox, and \sys are robust, i.e., the memorygrams in these three browsers are all zero.  Note that although Tor Browser was vulnerable,
  people have patched Tor Browser  to defend against this attack by further reducing the time resolution. 

Because both Tor Browser and FuzzyFox only limit the bandwidth of the side/covert channel,
 we create a modified version to show that a covert channel is still possible (the eighth row of Table~\ref{tab:robustness}).  Specifically, we create two iframes from different origins that talk to each other based on the cache attack for a long time.  
 The results show that if the attack lasts for two minutes, the receiver in both Tor Browser and FuzzyFox can successfully differentiate the one-bit information. That is, 
  the channel still exists with significantly reduced bandwidth in both Tor Browser and FuzzyFox.  

\subsubsection{SVG Filtering Attacks from Stone~\cite{stone2013pixel}} Stone shows that the performance of SVG filter can be used to differentiate the contents of images.  We apply the feMorphology SVG filter mentioned in the Stone's paper~\cite{stone2013pixel} to perform an erode operation upon two images, one with size 1920$\times$1080 and another with size 200$\times$200, and measure the operation time using $requestAnimationFrame$. Both images are generated randomly with a mixture of RGB colors and we perform the attack for 20 times. 


\begin{table}[t]
\centering
\caption{Measured Time to Perform an SVG Filter (All the numbers are in $ms$ and averaged from 20 experiments).}
\label{tab:svg} \vspace{-0.1in}
{\sffamily\fontsize{5.5}{9}\selectfont
\begin{tabular}{l*5c}
\toprule
 & Chrome&Firefox&Tor Browser&FuzzyFox&DeterFox\\
\midrule
200$\times$200 image & 16.66 & 17.01 & 18.94 & 109.09 & 192 \\
1920$\times$1080 image & 17.87 & 50.35 &  106.67 & 114.54 & 192 \\
\bottomrule
\end{tabular}}
\vspace{-0.15in}
\end{table}

The last row of Table~\ref{tab:robustness} shows the summary results of the attack.  All other browsers except \sys are vulnerable, i.e., the measured operation time for the large (1920$\times$1080) image is constantly longer than  the small one (200$\times$200) as shown in Table~\ref{tab:svg}.  
 Based on the results, an adversary can differentiate whether an cross-origin image is a thumbnail or full-sized. By contrast, \sys can defend against  the attack---i.e., no matter how large the image is, the operation time is always the same from the viewpoint of the JavaScript observer.  (The user can still tell the difference and will not experience the slowdown as in FuzzyFox and Tor Browser.)

\subsection{Performance Overhead}

\begin{figure*}
\begin{center}
\centering
\includegraphics[width=2\columnwidth]{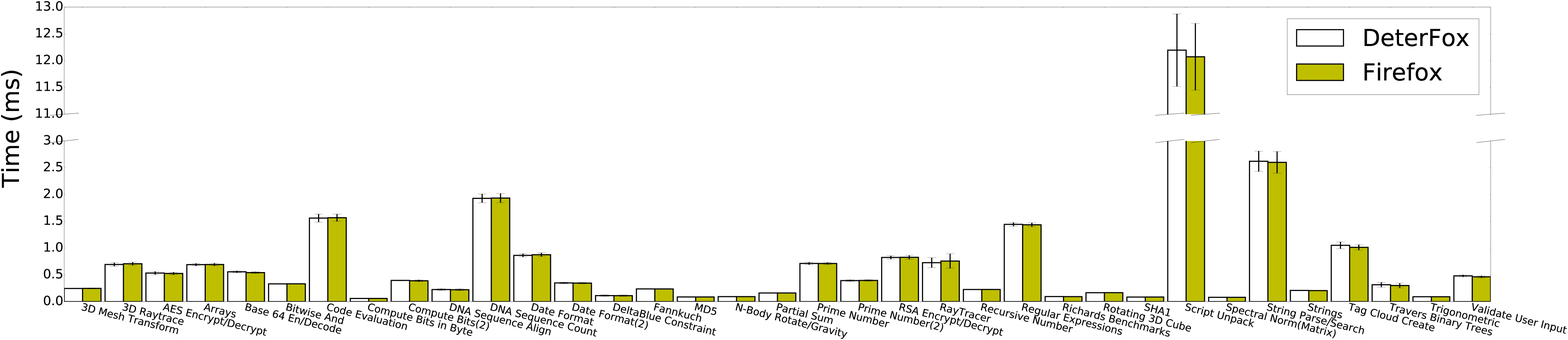} 
\vspace{-0.15in}
\caption{JavaScript Performance Benchmark Evaluation. (The median overhead of \sys is 0.63\%, and note that all timestamps are obtained based on a standard Linux machine.) }
\label{fig:jsboverhead}
\vspace{-0.10in}
\end{center}
\end{figure*}

We first evaluate the performance of \sys by JavaScript benchmark and Top 100 Alexa websites.   Then, we discuss the evaluation results. 

\subsubsection{JavaScript Benchmark} \label{subsec:jsbm}
 We first evaluate the performance overhead of \sys using a JavaScript performance test suite called Dromaeo~\cite{dromaeo}, a unified benchmark from Mozilla.  
  Each test in Dromaeo is executed for at least five times and maybe up to ten times if a significant level of error cannot be reached.  Next, the results are fit into a t-distribution to calculate the 95\% confidence interval.  Note that because we changed the clock in \sys, all the timestamps used in the experiment are based on the local time of a Linux machine instead of the JavaScript time. 

Figure~\ref{fig:jsboverhead} shows the evaluation results with the median overhead of \sys as 0.63\% and the maximum as 1\% when compared with the legacy Firefox of the same version. That is, the overhead introduced by \sys is almost ignorable in terms of JavaScript performance when observed by an oracle.  In three of the test cases, \sys is even slightly faster than Firefox because \sys does not invoke any system calls during $performance.now$, but legacy FireFox does.  If the measured time is very small, the invocation of $performance.now$ will influence the result. 
%
 Another interesting experiment is to disable the JIT mode for both \sys and the legacy Firefox and test the performance overhead to show the influence of the added counter.  The results show the median overhead for the interpreter mode as  6.1\%. 


\subsubsection{Top 100 Alexa Websites}
 We then evaluate the performance overhead of \sys against Top 100 Alexa websites. Specifically, we measure the loading time of Top 100 websites in four browsers and show the cumulative distribution function (CDF) of the loading time.  

Figure~\ref{fig:loadingtimealexa} shows the results of this experiment.  The CDF curve of \sys is very close to the one of legacy Firefox.  The median overhead of \sys compared with legacy Firefox is 0.1\%, i.e., with no statistical difference.  At contrast, both Tor Browser and FuzzyFox incur a non-ignorable overhead.  The reason is that both Tor Browser and FuzzyFox add jitters causing a delays during page loading, but \sys only adjusts the time in each RF, i.e., the time that JavaScript observes, while the user does not observe similar overhead as JavaScript does.  

\begin{figure}
\begin{center}
\centering
\vspace{-0.15in}
\includegraphics[width=0.85\columnwidth]{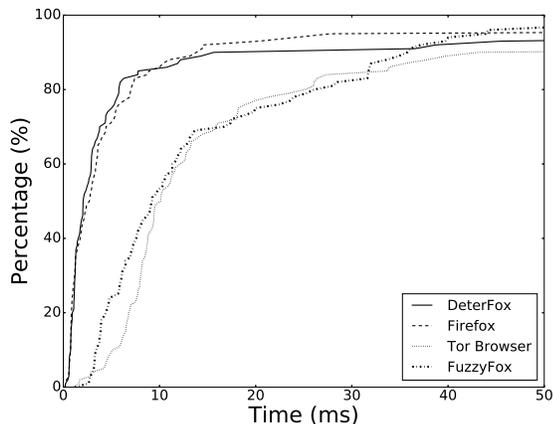} 
\vspace{-0.15in}
\caption{Cumulative Distribution Function of Loading Time of  Top 100 Alexa Websites. (Note that all timestamps are obtained based on a standard Linux machine.  The median overhead is 0.1\%, i.e., there is no statistical difference between Firefox and \sys.  The x-axis is cut off at 50ms, because some Top 100 Alexa websites are located in China and the loading time is very long, e.g., 1--2mins, and more importantly the loading time of such websites are dominated by the network latency, i.e., unrelated to the browser. )}
\label{fig:loadingtimealexa}
\vspace{-0.15in}
\end{center}
\end{figure}

\subsubsection{Discussion}
 The overhead of \sys comes from two parts: the counter and the priority queue.  The counter brings overhead because the incrementing behavior takes time. Further, the overhead is small, because the incrementing behavior is very cheap.  The priority queue does not bring any overhead if there is only one event waiting.  When there are two (or more than two) events  in the queue, if the event---which \sys arranges behind the other---comes first, the arrangement will cause overhead.  However, in practice, as shown in Top 100 Alexa websites, this case is rare, and even if it happens the other event will also come in a short time, which bring little overhead for \sys.  



\subsection{Compatibility}

We evaluate the compatibility of \sys from three perspectives: Mochitest, rendering Top 100 Alexa websites, and behind-login functionalities of popular websites. 

\subsubsection{Mochitest} Mochitest~\cite{mochitest} is a comprehensive, automated testing framework from Mozilla, and Mozilla will use Mochitest to test Firefox before each release.  The entire Mochitest that we use from Firefox nightly 51.0a1 contains 878,556 individual tests grouped into 41,264 categories, and will run for 8 hours on a standard Linux virtual machine~\cite{firefoxvm} downloaded from Mozilla official website. Note that even the legacy Firefox cannot pass all the tests on that virtual machine: Specifically, the legacy Firefox nightly 51.0a1 only passes 97.8\% of tests---859,283 out of 878,556 as passed, 9,943 as OK, i.e., in their todo list, and 9,330 as failed. 

Our evaluation results show that \sys passes 97.6\% of tests, i.e., 10,339 as failed, 10,358 as OK, and the rest as passed.  We look at these failed cases and find that \sys fails additional 1,502 tests when compared with the legacy Firefox.  Interesting, there are 493 tests that \sys passes but the legacy Firefox fails.  One possible reason is that \sys changes the sequence of events, which may fix some concurrency bugs caused by rendering sequence.  We look into some of these cases that \sys fails, and find that they belong to many categories, such as DOM and layout. 

\subsubsection{Top 100 Alexa Websites} We evaluate the compatibility of \sys by comparing the screenshot of \sys with the one of legacy Firefox browser.  Because modern websites contain many ads, even two sequential visits to the same website may render differently. Thus we manually look at these websites in which the similarity of the screenshots between \sys and Firefox is smaller than 0.9 (A similar threshold to test browser compatibility is also used in the literature~\cite{cspautogen}). 

In total, 63 websites pass the initial test and we manually examine the rest 37.  All the differences are highlighted automatically by a red circle for manual inspection.  The results show that all differences are caused by dynamic contents, such as a different ad and a news update. In sum, \sys is compatible with Top 100 Alexa websites, i.e., does not cause noticeable differences when rendering the front page of these websites. 

\subsubsection{Behind-login Functionalities of Some Popular Websites}  In this subsection, we evaluate the behind-login functionalities of several popular websites. Specifically, we choose the most popular website, according to Alexa, from several website categories including email, social network, online shopping, video, and JavaScript game.  Our manual inspection shows that we can successfully perform corresponding actions.  Here are the details. 

\begin{icompact}
\item {Email: Gmail. }  We register a new account with Google, log into Gmail, and then send an email with an attachment to another email address.  From a different computer, we reply to this email and attach another file in the reply.  Then, from the first computer, we receive the reply, look at the contents, and then download the attachment in the reply.
\item {Social Network:  Facebook.} We register a new account with Facebook, and log in.  Then, we configure several privacy settings following Facebook's tutorial, and add several friends.  Next, we post a status in Facebook by sharing a news.  We also talk with a friend via Facebook message. 
\item {Online Shopping: Amazon. } We register a new account and log in.  Then, we browse several items using the search functions in Amazon.  Next, we add a book to our shopping cart, proceed to checkout, and purchase the book with a newly added credit card. 
\item {Video: Youtube.} We log into Youtube with the Google account, and search several keywords.  Then, we select a video, and watch it for one minute.  We also post a comment under the video, and then delete the comment.  
\item {JavaScript Game.} We search the keyword ``JavaScript Game" on Google and click the first JavaScript game in the list, i.e., the fifth item during our search.  The JavaScript game is very similar to gluttonous snake in which the gamer can control an item to eat others.  We play the game for one minute. 
\end{icompact}


\section{Discussion}
 \label{sec:diss}
We discuss several problems in this section.  \vspace{0.05in}

\noindent{\it Access to Physical Time.} One common misunderstanding of our deterministic browser is that JavaScript will lose access to the physical time.  A deterministic browser only prevents JavaScript (an observer) from accessing the physical clock when there is a secret event running at the same time, because necessary conditions of a timing attack involve an observer and a secret event (See Section~\ref{sec:threatmodel}); at contrast, if there is no secret event execution, JavaScript has free access to the physical clock.  Specifically, the deterministic browser kernel will fast forward the main clock in JavaScript to the physical one as discussed in Section~\ref{subsec:comm}.  In practice, because JavaScript is an event-driven language, if JavaScript is just waiting for one single event with physical time, e.g., a network request from the same origin, the main clock will be the same as the physical clock. 




\vspace{0.05in}

\noindent{\it User-related Timing Attacks.}  As discussed in Section~\ref{sec:threatmodel}, our deterministic browser, especially the current prototype \sys, does not consider timing-related biometric fingerprinting, i.e., the secret event in our prototype is restricted to the browser itself.  To defend such biometric fingerprinting that is related to time, we need to introduce determinism into corresponding reference frames.  For example, the clock in a keyboard reference frame needs to tick based on the number of pressed keys instead of the physical clock.  This, however, is considered as our future work. 

\vspace{0.05in}

\noindent{\it External Timers and Observers.} In real-world, external timers and observers, e.g., these that reside outside a browser, are impractical as pointed out by the FuzzyFox~\cite{fuzzyfox} paper.  Any timestamps or observations provided via such methods are too coarse-grained to perform any meaningful attacks.  Therefore, neither FuzzyFox nor Tor Browser considers external timers and observers in their thread model.  Theoretically, a deterministic browser can defend against such attacks performed by external timers or observers, because according to Definition~\ref{def:comm}, two RFs are synchronized during communication.  That is, when an external observer obtains the time or an internal observer obtains an external time, the time is out-of-dated, which only reflects the synchronized time.






\section{Related Work}

In the related work section, we first discuss existing timing attacks, and then present prior work that mitigates such timing attacks, especially browser-related ones.

\subsection{Browser-related Timing Attacks}

We discuss existing browser-related timing attacks below. 

\subsubsection{JavaScript Performance Fingerprinting} 
 Mowery et al.~\cite{mowery2011fingerprinting} and Mulazzani et al.~\cite{mulazzani2013fast} show that the performance of JavaScript, i.e., how long a certain set of JavaScript code takes to execute, can be used to differentiate, or called fingerprint, different types and versions of web browsers.  The reason behind such JavaScript performance fingerprinting is that different browsers have different JavaScript engine 
   implementations and thus different runtime performance behaviors. 
 

\subsubsection{Timing-based Side or Covert Channels} 
 Timing attacks~\cite{Zhang:2011:HCD:2006077.2006774,Zhang:2012:CSC:2382196.2382230,Hund:2013:PTS:2497621.2498111,Kocher:1996:TAI:646761.706156,conf/esorics/LiuGASSK09} in general have been studied for a long time.  We focus on browser-related side or covert channels. 
%
   Felten et al.~\cite{Felten:2000:TAW:352600.352606} first point out that due to the existence of web content caching, the loading time of external resource can be used to infer the browsing history in the past.  Then Bortz et al.~\cite{Bortz:2007:EPI:1242572.1242656} classify timing channels into two categories: direct timing attacks that infer private information stored at server side, and cross-site timing attacks that infer the size of a cross-site resource in the client browser.  The former is beyond the scope of the paper because the target secret  is caused by the server as discuss in Section~\ref{sec:threatmodel}; the latter is within the scope.  After that, Kotcher et al.~\cite{Kotcher:2013:CPS:2508859.2516712} find another timing attack, showing that the time of rendering a document after applying a CSS filter is related to the document's visual content.   Similarly, Goethem et al.~\cite{VanGoethem:2015:CST:2810103.2813632,vanrequest} show that the parsing time of scripts and video can be used to infer the size of the corresponding resource.  Oren et al.~\cite{Oren:2015:SSP:2810103.2813708} and Gras et al.~\cite{ndss17aslr} show that lower-level caching attacks can be launched from the JavaScript without privileged access. 





\subsection{Countermeasures of Timing Attacks}

We now discuss existing countermeasures of timing attacks. 

\subsubsection{Browser-level Defense} 
 Tor Browser~\cite{torbrowser} and Fermanta~\cite{fuzzyfox}  are the closest work to our deterministic browser.  Tor Browser~\cite{torbrowser}, an industry pioneer in fighting browser fingerprinting, reduces the clock resolution to 100ms, and adds noises at all places to mitigate the browser's fingerprintability.  Similarly, Fermata~\cite{fuzzyfox} reduces the clock resolution and adds several pause tasks in the event queue to reduce the resolution of implicit clocks. 
 The main difference with \sys 
   is that existing works only limit the attacks' capability, e.g., prolonging the attack or reducing the channel bandwidth, but do not fundamentally limit the timing attack.  
 
 Another difference between Fermata and deterministic browser is that Fermata requires that all synchronous JavaScript calls are converted to asynchronous because Fermata adds jitters in the event queue.  Such drastic changes will not be backward compatible with legacy JavaScript program and their implementation, i.e., FuzzyFox, does not support such changes. 
 

\subsubsection{System-level Defense (based on Determinism)} 
 Both StopWatch~\cite{Li:2014:SCA:2689660.2670940,DBLP:conf/dsn/LiGR13} and DeterLand~\cite{deterland} use determinism to prevent lower-level timing side or covert channels.  Their virtual time ticks based on the number of executed binary instructions in the virtual CPU.  Similarly, Burias et al.~\cite{buiras2013library} propose a deterministic information-flow control system to remove lower level cache attacks and then Stefan et al.~\cite{stefan2013eliminating} show that such cache attacks still exist given a reference clock. Aviram et al.~\cite{Aviram:2010:DTC:1866835.1866854} use provider-enforced deterministic execution to eliminate timing channels within a shared cloud domain.
 Compared with all existing determinisms, apart from the domain difference, i.e., browser v.s., lower level, the differences can be stated from three aspects.  First, the communication between the virtual and outside world in existing approaches is non-deterministic, i.e., they still add jitters or group event together to limit the bandwidth of side or covert channels.  Second, there is only one clock, i.e., the virtual time, defined in existing approaches, while our work introduces many clocks in different RFs.  

\subsubsection{Language-based Defense} 
 Many language-based defense~\cite{conf/csfw/HuismanWS06,DBLP:conf/csfw/ZdancewicM03,sabelfeld2000probabilistic,Smith:1998:SIF:268946.268975,volpano1997eliminating,DBLP:conf/pldi/ZhangAM12} have been proposed to provide well-typed language that either provably leaks a bounded amount of information or prevents timing attacks fundamentally.  Though effective, such approaches face backward compatibility problem, i.e., all the existing programs need to be rewritten and follow their specifications. 



\subsubsection{Detection of Timing Attacks} 
 Many approaches~\cite{Cabuk:2004:ICT:1030083.1030108,conf/ccs/GianvecchioW07,10.1109/SP.2006.28} focus on the detection of timing side or covert channels based on the extraction of high-level information, such as entropy.  Some existing work, namely Chen et al.~\cite{186208}, adopt determinism to replay and detect timing channels via non-deterministic events.  The difference between existing works in detecting timing attacks and our deterministic browser is that such existing work is reactive, i.e., waiting for timing attacks to happen and then detecting them, while our deterministic browser is proactive, i.e., preventing timing attacks from happening in the first place. 


\subsubsection{Defense against Web Tracking} 
 Some approaches, e.g., PriVaricator~\cite{Nikiforakis:2015:PDF:2736277.2741090} 
and TrackingFree~\cite{trackingfree}, aim to prevent web tracking in general.  The purpose of these approaches is different from ours. 
 JavaScript performance fingerprinting is just a small component of web tracking, and neither work can defend against JavaScript performance fingerprinting. 

\subsection{Other Similar Techniques}

We discuss some existing techniques that are similar to our deterministic browser.

\subsubsection{Determinism in General}  
 Determinism is also used in deterministic scheduling~\cite{parrot:sosp13,peregrine:sosp11,cui:tern:osdi10,smt:cacm,liu2011dthreads,olszewski2009kendo}, such deterministic multithreading.  
  Such techniques make the execution of programs, e.g., a  multithreaded one, follow a certain pattern so as to mitigate bugs, e.g., concurrency ones.  That is, they need to align execution sequence rather than adjusting the clocks.  As a comparison, our deterministic browser not only align execution sequence as in deterministic scheduling,  but also adjust different clocks in RFs. 
 
\subsubsection{Logical Time} 
 Logical time is also used in distributed systems to solve the problem of ``happening before'', such as 
Lamport Clock~\cite{Lamport:1978:TCO:359545.359563} and  
 Virtual Time~\cite{Jefferson:1985:VT:3916.3988}. The purpose of such work is to define a partial or total order of events so that causal dependency can be correctly resolved.  The reason for such logical time is that a distributed system lacks a centralized management; by contrast, a deterministic browser has a browser kernel to coordinate different RFs with stronger clock synchronization  than ordering events.


\section{Conclusion}

In conclusion, we propose deterministic browser, the first approach that introduces determinism into web browsers and provably prevents browser-related timing attacks.   
 Specifically, we break a web browser down into many small units, called reference frames (RFs).  In a RF, we can easily remove one of the three key elements, i.e., an adversary, a target secret and a reference clock, in timing attacks.  To achieve the removal purpose, we have two tasks: ($i$) making RFs with a target deterministic, and ($ii$) making the communication between RFs---especially a deterministic RF with a non-deterministic one---deterministic.  
 
 %
 We implemented a prototype of deterministic browser, called \sys, upon Firefox browser.  Our evaluation shows that \sys can defend against existing timing attacks in the literature, and is compatible with real-world websites.

\section{Acknowledgement}
The authors would like to thank anonymous reviewers for
their thoughtful comments. This work is supported in part
by U.S. National Science Foundation (NSF) under Grants
CNS-1646662 and CNS-1563843. The views and conclusions
contained herein are those of the authors and should not be
interpreted as necessarily representing the official policies or
endorsements, either expressed or implied, of NSF.

\bibliographystyle{ACM-Reference-Format}
\balance
\bibliography{reference,cao,websec,unlearning,r1}
\normalsize

\appendix
\section{Proof of Lemma 1} \label{lemma1}

\begin{proof}
 Let us look at the two conditions in Definition~\ref{def:comm} separately.  First, if the clocks in both RFs are synchronized at the moment of the communication, the sender does not convey additional timing information to an observer in the receiver.  Therefore, Definition~\ref{def:det} is satisfied.



Second, if the sender is deterministic, the clock information can be directly conveyed to the receiver.  Specifically, we will show that Definition~\ref{def:det} is satisfied in either clock or a mixture of both, which makes the receiver's RF still deterministic.  Because the receiver is deterministic in its own clock, we just need to consider the other two cases. 

($i$) Since the sender is deterministic, following Definition~\ref{def:det}, we have Equation~\ref{eq:sender}.
\begin{equation}
t^{sender}_2 - t^{sender}_1 = f(O^{sender}_1,O^{sender}_2) \label{eq:sender}
\end{equation}

Because the sender can convey any information in the message, both $O^{receiver}_1$ and $O^{receiver}_2$ contain information about $O^{sender}_1$ and $O^{sender}_2$.  That is, we will have Equation~\ref{eq:senderinrev}, meaning that the receiver is deterministic in the sender's clock. 
\begin{equation}
t^{sender}_2 - t^{sender}_1 = f^\prime(O^{receiver}_1,O^{receiver}_2) \label{eq:senderinrev}
\end{equation}

($ii$) We further show that the mixture of the sender's and the receiver's clock, i.e., $t^{sender}_2  - t^{receiver}_1$, is deterministic. According to Equation~\ref{eq:senderinrev} and the definition of the receiver's determinism, we have Equation~\ref{eq:mix1}.  

\begin{equation}
\begin{split}
&t^{sender}_2 - t^{sender}_{start} = f_1(O^{sender}_2, O^{sender}_{start})\\
&t^{receiver}_1 - t^{receiver}_{start} = f_2(O^{receiver}_1,O^{receiver}_{start}) \\
\end{split}\label{eq:mix1}
\end{equation}

When we minus the second equation in Equation~\ref{eq:mix1} from the first, we will have Equation~\ref{eq:mix2}.
\begin{equation}
\begin{split}
&(t^{sender}_2 - t^{sender}_{start}) - (t^{receiver}_1 - t^{receiver}_{start})\\
&= f_1(O^{sender}_2, O^{sender}_{start}) \\
& - f_2(O^{receiver}_1,O^{receiver}_{start})\\
\end{split}\label{eq:mix2}
\end{equation}

After doing some transformations in Equation~\ref{eq:mix2}, we will have Equation~\ref{eq:mix3}, following Definition~\ref{def:det} and showing that the receiver is deterministic in the clock mixing both the sender's and the receivers.  Note that both $O^{sender}_{start}$ and $O^{receiver}_{start}$ are constant. 
\begin{equation}
\begin{split}
&t^{sender}_2  - t^{receiver}_1 \\
&=t^{sender}_{start} - t^{receiver}_{start} +  f_1(O^{sender}_2, O^{sender}_{start}) \\
& - f_2(O^{receiver}_1,O^{receiver}_{start})\\
& = f^{\prime\prime}(O^{receiver}_{1},O^{sender}_2)
\end{split}
\label{eq:mix3}
\end{equation}


In sum, if a communication from a sender to a receiver obeys Definition~\ref{def:comm}, the communication does not break the determinism in the receiver. 

\end{proof}

\section{Proof of Theorem 1} \label{theorem1}


\begin{proof}
According to Lemma~\ref{theorem:detdefeattiming}, even if an adversary observer in a RF obtains an external clock, the RF is still deterministic.  Therefore, without loss of generality, we only consider one clock in the proof.  There are two sub-scenarios: (i) an observer in a RF measuring an internal target secret, and (ii) an observer in a RF measuring an external target secret in another RF.

($i$) Let us look at an observer in a RF measuring an internal target secret.  Let us assume that a target secret happens in two copies ($RF1$ and $RF2$) of the same RF, e.g., the execution runtime of the same JavaScript code with the same inputs. An observer ($Ob$) makes observations ($O^{RF1}_{\{St, End\}}$ and $O^{RF2}_{\{St, End\}}$) of the target secret in these two RFs.  We have the following Equation~\ref{eq:obeq}.
\begin{equation} \label{eq:obeq}
O^{RF1}_{End} = O^{RF2}_{End} \hspace{0.3in} O^{RF1}_{St} = O^{RF2}_{St} 
\end{equation}

With Equation~\ref{eq:obeq}, we have Equation~\ref{eq:determinismdefeattiming}. 

\begin{equation}\label{eq:determinismdefeattiming}
\begin{split}
&\Delta t^{RF1} - \Delta t^{RF2} = (t^{RF1}_{End} - t^{RF1}_{St}) - (t^{RF2}_{End} - t^{RF2}_{St}) \\
&= f(O^{RF1}_{End}, O^{RF1}_{St}) - f(O^{RF2}_{End}, O^{RF2}_{St})  \\
&= 0 \\
&\Delta t^{RF1} = \Delta t^{RF2} = const
\end{split}
\end{equation}

Because $RF1$ and $RF2$ are arbitrary RFs, we show that $\Delta t = t_{End} - t_{St} = const$

($ii$)  Let us then look at an observer in a RF measuring an external target secret in another RF.  Because according to Definition~\ref{def:comm}, the clocks in both RFs are synchronized,  the measurement that the observer makes is the same as another observer, maybe virtual, in the RF with a target secret.  Therefore, the sub-scenario just boils down the first one.

%


\end{proof}

\section{Proof of Theorem 2} \label{app:deteq}


\begin{proof}
We prove this theorem from two aspects:

\begin{icompact}
\item {Definition~\ref{def:det}$\Rightarrow$Definition~\ref{def:jsdet}:} 
 Say for two different executions ($E1$ and $E2$), i.e., two RFs in the context of JavaScript, an observer makes two observations at the starting point ($st$) of the execution and the timestamp of a specific opcode ($op$).  According to Definition~\ref{def:det}, we will have Equation~\ref{eq:opst1} and~\ref{eq:opst2}. 
\begin{equation}
t_{op}^1-t_{st}^1 = f(O_{op}^1, O_{st}^1)
\label{eq:opst1}
\end{equation}
\begin{equation} 
t_{op}^2-t_{st}^2 = f(O_{op}^2, O_{st}^2)
\label{eq:opst2}
\end{equation}

Then, we perform this operation (Equation~\ref{eq:opst1}$-$Equation~\ref{eq:opst2}):
\begin{equation}
\begin{split}
&t_{op}^1-t_{st}^1 - (t_{op}^2-t_{st}^2) \\
& = f(O_{op}^1, O_{st}^1) -  f(O_{op}^2, O_{st}^2) \\
\Rightarrow \enskip &t_{op}^1 - t_{op}^2 \\
&= t_{st}^1 - t_{st}^2 + f(O_{op}^1, O_{st}^1) -  f(O_{op}^2, O_{st}^2) \\
&= C\\
\end{split}\label{eq:opst3}
\end{equation}

Equation~\ref{eq:opst3} obeys Definition~\ref{def:jsdet}. \\

\item {Definition~\ref{def:jsdet}$\Rightarrow$Definition~\ref{def:det}:}  Say there is a specific execution ($E_{sd}$), i.e., a RF in JavaScript context, which can be used as a standard for comparison with any other executions.  The execution starts from $op_1$ to $op_n$. For an arbitrary opcode ($op_i$ where $i\in \{1...n\}$) in that execution, we have the following Equation~\ref{eq:sdexe}.
\begin{equation}
t_{op_i}^{E_{sd}} = \sum_{k\in \{1...i\}} t_{op_k}^{E_{sd}} \label{eq:sdexe}
\end{equation}

For an arbitrary execution ($E_{ab}$), a replicate of $E_{sd}$ but at a different time, according to Definition~\ref{def:jsdet}, we have Equation~\ref{eq:abstcomp}.
\begin{equation}
\begin{split}
&t_{op_i}^{E_{ab}} - t_{op_i}^{E_{sd}} = C \\
\Rightarrow \enskip &t_{op_i}^{E_{ab}} = t_{op_i}^{E_{sd}} + C = \sum_{k\in \{1...i\}} t_{op_k}^{E_{sd}} + C \\
\end{split} \label{eq:abstcomp}
\end{equation}

Now, say an observer makes two observations in $E_{ab}$, which corresponds to two opcodes ($op_i$ and $op_j$). We have Equation~\ref{eq:detproof}.
\begin{equation}
\begin{split}
t_{op_i}^{E_{ab}} - t_{op_j}^{E_{ab}} &= \sum_{k\in \{1...i\}} t_{op_k}^{E_{sd}} + C - (\sum_{k\in \{1...j\}} t_{op_k}^{E_{sd}} + C)  \\
 &= \sum_{k\in \{i...j\}} t_{op_k}^{E_{sd}} \\
 &= f(op_i, op_j) \\ 
 \end{split} \label{eq:detproof}
\end{equation}

Equation~\ref{eq:detproof} obeys Definition~\ref{def:det}. Note that the last step in  Equation~\ref{eq:detproof} is due to that $E_{ab}$ is a replication of $E_{sd}$ and the observations in a JavaScript RF are the executed opcodes.  

\end{icompact}


\end{proof}


\end{document}